\newcommand{\fzt}{\tilde{\vf}_0}
\newcommand{\fot}{\tilde{\vf}_1}
\newcommand{\fit}{\tilde{\vf}_i}
\newcommand{\lmax}{\ell_{\mathrm{max}}}
\newcommand{\vel}{\varepsilon_{\ell}}
\newcommand{\velm}{\varepsilon_{\ell-1}}
\newcommand{\kla}{k_{\ell}^{(1)}}
\newcommand{\klb}{k_{\ell}^{(2)}}
\newcommand{\kza}{k_{0}^{(1)}}
\newcommand{\kzb}{k_0^{(2)}}
\newcommand{\nld}{n_{\ell_d}}
\newcommand{\pe}[1]{P_e^{(#1)}}
\newcommand{\ka}{k^{(1)}}
\newcommand{\kb}{k^{(2)}}
\newcommand{\peld}{P_e^{(\mathrm{loc})}}
\newcommand{\wtx}{\widetilde{x}}
\newcommand{\wtX}{\widetilde{X}}
\newcommand{\bx}{\bar{x}}
\begin{document}

\title{Local Decoding in Distributed Compression
  \thanks{S.\ Vatedka is with the Department of Electrical Engineering, Indian Institute of Technology Hyderabad, India.  V.\ Chandar is with DE Shaw, New York, USA.  A.\ Tchamkerten is with the Department of Communications and Electronics, Telecom Paris, Institut Polytechnique de Paris, France.\\The work of S.\ Vatedka was supported by a seed grant from IIT Hyderabad and a Start-Up Research grant (SRG/2020/000910) from the Science and Engineering Research Board, India.\\
This work was presented in part at the 2022 IEEE International Symposium on Information Theory~\cite{vatedka2022_conferenceversion}.}
}
\author{
\IEEEauthorblockN{Shashank Vatedka},
\and
\IEEEauthorblockN{Venkat Chandar},
\and
\IEEEauthorblockN{Aslan Tchamkerten}
}
\maketitle
 
\begin{abstract}
It was recently shown that the lossless compression of a single source $X^n$ is achievable with a notion of strong locality; any $X_i$ can be decoded from a {\emph{constant}} number of compressed bits, with a vanishing in $n$ probability of error. By contrast, we show that for two separately encoded sources $(X^n,Y^n)$, lossless compression and strong locality is generally not possible. Specifically, we show that for the class of ``confusable'' sources, strong locality cannot be achieved whenever one of the sources is compressed below its entropy. Irrespective of $n$, for some index $i$ the probability of error of decoding $(X_i,Y_i)$ is lower bounded by $2^{-O(\rwc)}$, where $\rwc$ denotes the number of compressed bits accessed by the local decoder. Conversely, if the source is not confusable, strong locality is possible even if one of the sources is compressed below its entropy. Results extend to an arbitrary number of sources.
\end{abstract}

\section{Introduction}

The amount of data generated in many applications such as astronomy and genomics has highlighted the growing need for compression schemes that allow to interact and manipulate data directly in the compressed domain~\cite{pavlichin2013human,pavlichin2018quest,chen2014data,hashem2015rise,ball2012public,uk10k2015uk10k,schadt2010computational}. Indeed, traditional compression schemes such as Lempel-Ziv~\cite{ziv1977universal,ziv1978compression} are suboptimal in this regard since the recovery of even a single message symbol necessitates to decompress the entire dataset. Accordingly, this paper focuses on providing random access in the compressed domain, where short fragments of data can be recovered without accessing the entire compressed sequence. 

For the single source setup, \cite{mazumdar2015local,tatwawadi18isit_universalRA} showed that a strong notion of locality holds: for any rate above entropy there exists an encoder and a local decoder which probes a constant number $\rwc$ (independent of $n$) of compressed symbols, and yet achieves vanishing error probability as $n$ grows. Note that the concatenation scheme where the source is decomposed into $n/b$ consecutive blocks of some size $b$, each of which independently compressed at a desired rate $R>H(X)$, is not strongly local. Indeed, any $X_i$ is independent of all $n/b$ sub-block codewords, except one which reveals $b$ message symbols, and $X_i$ in particular. Hence, only ``weak'' locality holds in the sense that for the local decoder error probability to vanish, the number of probed symbols---here equal to the sub-block codeword length $b\cdot R$---must grow with $n$. 

In this paper we address the question whether strong locality extends to the Slepian-Wolf distributed compression of two sources $X^n$ and $Y^n$: given $(R_1,R_2)$ within the Slepian-Wolf rate region, is it possible to design a fix-length compressor and a local decompressor with $\rwc=O(1)$, and whose error probability is $o(1)$ as $n$ grows? 

Obviously, if each source is compressed above its entropy then strong locality holds simply by duplicating the results of \cite{mazumdar2015local,tatwawadi18isit_universalRA} separately for each of the sources. Note also that the concatenation scheme---wherein $(X^n,Y^n)$ is decomposed into consecutive sub-blocks of size $b$ each of which encoded via Slepian-Wolf coding---achieves weak locality at any $(R_1,R_2)$ within the Slepian-Wolf rate region. So the interesting question is: does strong locality hold when at least one of the sources is compressed below its entropy?

Our main result says that strong locality is generally impossible. More precisely, suppose $p_{XY}$ is ``confusable'' in the sense that, for every $x_1$ and $x_2$ in $\cal{X}$ there exists $y\in \cal{Y}$ such that $p_{XY}(x_1,y)>0$ and $p_{XY}(x_2,y)>0$. In this case, we show that if $R_1<H(X)$, the probability of wrongly decoding $(X_i,Y_i)$ is lower bounded by $2^{-\Theta(\rwc)}$, for some index $1\leq i\leq n$. Moreover, this conclusion holds even if the decoder tries to decode only $X_i$ with the full cooperation of the $Y$-transmitter that provides $Y^n$ uncompressed. Conversely, if $p_{XY}$ is not confusable, then strong locality is possible for some $R_1<H(X)$ and $R_2=H(Y)$. 

Hence, when the source is confusable, the concatenation scheme is order optimal in the tradeoff between local error probability and number of probes. However, a drawback of the concatenation scheme is that both the encoding and the decoding are tied to the sub-block length $b$ which governs the error probability of the local decoder. Even if both codewords are  entirely probed, that is $\rwc=n(R_1+R_2)$, the error probability remains the same as if $\rwc=b(R_1+R_2)$. Thus, to lower the error probability of the local decoder, the encoding procedure must be modified accordingly. We address this limitation through a hierarchical compression scheme whose local decoder achieves an error probability that decreases as $\rwc$ increases, without modifying the encoding. Specifically, for any $(R_1,R_2)$ within the Slepian-Wolf rate region, and for every $1>\eta>2^{-2^{O(\log n)}}$, the local decoder achieves $\peld\leq \eta$ with $\rwc=\mathrm{poly}(\log(1/\eta))$.  

\subsection{Literature on locally decodable compression}	

Local decoding has been studied extensively in the context of compressed data structures by the computer science community;  see, {\it{e.g.}},~\cite{patrascu2008succincter,dodis2010changing,munro2015compressed,raman2003succinct,chandar2009locally,chandar_thesis} and the references therein. Most of these results hold under the word-RAM model which assumes that operations (memory access, arithmetic operations) on $ w $-bit words take constant time. The word size $w$ is typically chosen to be $\Theta(\log n)$ bits, motivated in part by on-chip type of applications where data transfer happens through  a common memory bus for both data and addressing (hence $w=\Theta(\log n)$ bits), and partly by the fact that certain proof techniques work only when $w=\Omega(\log n)$. 

	In the word-RAM model, it is possible to compress any sequence to its empirical entropy and still be able to locally decode any message symbol in constant time~\cite{patrascu2008succincter,dodis2010changing}. Most approaches modify the Lempel-Ziv class of algorithms to provide efficient local decodability~\cite{kreft2010lz77,dutta2013simple,bille2011random}. Similar results also hold for compression of correlated data~\cite{viola2019howtostore}, and efficient recovery of short substrings of the message~\cite{sadakane2006squeezing,gonzalez2006statistical,ferragina2007simple,kreft2010lz77}. However, all of these schemes require the local decoder to probe at least $O(\log n)$ compressed bits to recover any source symbol.
		
        In this work, the decoding cost is measured by the \emph{number of compressed bits} that need to be accessed in order to recover a single source symbol, sometimes referred to as the \emph{local decodability}~\cite{makhdoumi_locally_2015}, or the \emph{bit-probe complexity} in the literature~\cite{nicholson2013survey}.
        
The problem of locally decodable source coding of random sequences was first studied by~\cite{makhdoumi2013locally-arxiv,makhdoumi_locally_2015}. These works showed that any compressor with $\rwc=2$ cannot achieve a rate below the trivial rate $\log|\cX|$, and any \emph{linear} source code that achieves $ \rwc=\Theta(1) $ necessarily operates at a trivial compression rate ($R=1$ for binary sources).
	Later,~\cite{mazumdar2015local} showed that for any $\epsilon>0$, rate $ H(X)+\varepsilon $ is achievable with local decodability $\rwc= \Theta(\frac{1}{\varepsilon}\log\frac{1}{\varepsilon}) $. Moreover, for non-dyadic sources, $ \rwc=\Omega(\log(1/\varepsilon)) $ for any compression scheme that achieves rate $ H(X)+\varepsilon $. Inspired by \cite{mazumdar2015local}, a compressor of Markov sources  was given in~\cite{tatwawadi18isit_universalRA} which achieves a rate-locality tradeoff $(R=H(X)+\varepsilon,\rwc= \Theta(\frac{1}{\varepsilon^2}\log\frac{1}{\varepsilon}) $. A common feature of the code construction in both papers is the use of the bitvector compressor of Buhrman \emph{et al.}~\cite{buhrman2002bitvectors} which is based on a nonexplicit construction of expander graphs. 
	
	All the above papers on the bit-probe model consider fixed-length block coding. Variable-length source coding was investigated by Pananjady and Courtade~\cite{pananjady2018effect} who gave upper and lower bounds on the achievable rate for the compression of sparse sequences under local decodability constraints. 
	
	The works~\cite{vatedka_local_2020,vatedka2020log} considered simultaneous local decodability and update efficiency. In particular,~\cite{vatedka_local_2020} designed a compressor whose average-case local decodability (defined as the expected number of bits that need to be probed to recover any $X_i$) and the average-case update efficiency (the expected number of bits that need to be read and written in order to update a single $X_i$) both scale as $O\left( \frac{1}{\varepsilon^2}\log \frac{1}{\varepsilon} \right)$. In fact, our scheme for distributed compression with locality is inspired by the multilevel compression scheme in~\cite{vatedka_local_2020}. The paper~\cite{vatedka2020log} designed a compression scheme whose worst-case local decodability and update efficiency scales as $O(\log\log n)$. More recently,~\cite{vestergaard2021enabling,kamparaju2022low} implemented different versions of the concatenation scheme and evaluated its performance on practical datasets.

\subsection{Paper organization}
In Section~\ref{pre}, we introduce notions of localities and formally define the problem. In Section~\ref{mainresultato}, we present our results. In Sections~\ref{prova1},~\ref{confusable_below}, and \ref{sec:achievability} we prove the results, and in Section~\ref{estensione}, we discuss the extension to more than two sources. In Section~\ref{remfin}, we draw concluding remarks. 

\section{Preliminaries and Problem Statement}\label{pre}

\subsection{Distributed compression without locality}\label{sces}
Let $(X^n,Y^n)$ be $n$ independent copies of a pair of random variables $(X,Y)\sim p_{XY}$ defined over some finite alphabet ${\cal{X}}\times {\cal{Y}}$, with $|{\cal{X}}|\geq 2, |{\cal{Y}}|\geq 2$. Without loss of generality, we assume that  ${\cal{X}}=\{x: p_X(x)>0\}$ and ${\cal{Y}}=\{y: p_Y(y)>0\}$. 

Sequences $X^n$ and $Y^n$ represent two sources of information separately encoded into binary codewords $C^{nR_1}$ and $C^{nR_2}$ at rates $R_1$ and $R_2$, respectively. Upon receiving these codewords, a receiver outputs the sources estimates $(\hat{X}^n,\hat{Y}^n)$ and makes an error with probability
$$P_e \defeq \Pr[(\hat{X}^n,\hat{Y}^n)\ne ({X}^n,{Y}^n)].$$

The rate region is the closure of the set of rate pairs $(R_1,R_2)$ for which $P_e\to 0$ as $n\to \infty$, and is given by:
\begin{theorem}[Slepian-Wolf, \cite{slepian1973noiseless}, \cite{csiszar1980towards}]\label{thm:slepianwolf}
The rate region of a source $p_{XY}$ is the set of pairs $(R_1,R_2)$ that satisfy
\begin{align}
  R_1 &\geq H(X|Y) &\notag\\
  R_2 &\geq H(Y|X) &\notag\\
  R_1+R_2 &\geq H(X,Y). &\label{eq:sw_region}
\end{align}
Moreover, for any $(R_1,R_2)$ in the interior of the rate region, and any $\varepsilon >0$, there exist a sequence of coding schemes operating at rates at most $ R_1+\varepsilon$ and $R_2+\varepsilon$ such that
\[
  P_e \defeq \Pr[(\hat{X}^n,\hat{Y}^n)\neq (X^n,Y^n)] \leq 2^{-n(E-\varepsilon )},
\]
where $E$ is a constant that is specified by $R_1,R_2$ and~$p_{XY}$. 
\end{theorem}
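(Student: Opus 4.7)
The plan is to prove the rate region characterization and the exponential error bound via the standard random binning construction, combined with a method-of-types analysis for the exponent, and Fano's inequality for the converse.

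For achievability, I would construct the two encoders independently at random: assign to each $x^n \in \mathcal{X}^n$ a uniformly drawn bin index in $\{1,\ldots,2^{nR_1}\}$, and independently do the same for each $y^n \in \mathcal{Y}^n$ with bin range $\{1,\ldots,2^{nR_2}\}$. The decoder, upon observing the bin indices $(b_1,b_2)$, outputs the pair $(\hat{x}^n,\hat{y}^n)$ whose sequences fall in the correct bins and that jointly minimizes the empirical entropy $H(\hat{P}_{x^ny^n})$ (equivalently, a minimum-entropy / maximum-likelihood decoder on types). To analyze this, I would partition the error event according to which subset of the true pair is incorrectly decoded (only $x^n$ wrong, only $y^n$ wrong, or both wrong). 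Using the fact that the number of sequences in a type class of joint type $Q_{XY}$ is at most $2^{nH(Q_{XY})}$, and that there are polynomially many types, I would bound the probability that a confusing sequence falls into the same bin by $2^{-nR_i}$ times the type-class size, then sum (via types) and optimize. This yields an error exponent $E(R_1,R_2,p_{XY}) > 0$ whenever $(R_1,R_2)$ lies strictly in the region \eqref{eq:sw_region}, giving the claimed bound $P_e \leq 2^{-n(E - \varepsilon)}$. An averaging argument over the random code ensemble then produces a deterministic code achieving the same bound at rates $R_1+\varepsilon, R_2+\varepsilon$.

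For the converse, I would use standard Fano-type arguments. From $P_e \to 0$, Fano's inequality yields $H(X^n,Y^n \mid C^{nR_1}(X^n),C^{nR_2}(Y^n)) \leq n\varepsilon_n$ with $\varepsilon_n \to 0$. Combined with $H(X^n,Y^n) = n H(X,Y)$ and the trivial cardinality bounds $H(C^{nR_i}) \leq nR_i$, this gives $R_1+R_2 \geq H(X,Y) - \varepsilon_n$. For the individual constraints, condition additionally on $Y^n$ (respectively $X^n$) being provided as side information; since giving side information only helps, we have $H(X^n \mid C^{nR_1}(X^n), Y^n) \leq n\varepsilon_n$, yielding $R_1 \geq H(X|Y) - \varepsilon_n$, and symmetrically for $R_2$.

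The main technical obstacle is obtaining the exponent rather than just vanishing error: the naive joint typicality decoder gives only $P_e \to 0$ without a controlled rate of decay when the rate pair is close to the boundary. The fix is to abandon $\varepsilon$-typicality in favor of the method of types and a minimum-entropy (universal) decoder, carefully tracking, for each candidate wrong type, the tradeoff between the number of sequences of that type and the probability that one of them lands in the same bin as the true sequence. This is exactly the classical calculation of Csiszár--Körner--Marton that yields a strictly positive exponent for every interior rate pair, and packaging it into the statement completes the proof.
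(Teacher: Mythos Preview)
The paper does not prove this theorem: it is stated as a classical result with citations to Slepian--Wolf and Csisz\'ar, and is used only as background for the locality questions that follow. Your outline is correct and is precisely the standard argument from those references---random binning with a minimum-entropy (universal) decoder analyzed via the method of types for the exponent, and Fano's inequality for the converse---so there is nothing to compare against in the paper itself.
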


\subsection{Distributed compression with locality}\label{sec:statements_results}
\subsubsection{Local decoder}
Given encodings $C^{nR_1}$ and $C^{nR_2}$, a local decoder takes as input $i\in [n]$,\footnote{$[n] \defeq \{1,2,\ldots, n\}$}  probes/reads a fixed set $\cI_i$ of components from $C^{nR_1}$ and $C^{nR_2}$, which we  denote as $C_{\cI_i}$, and outputs an estimate $(\hat{X}_i,\hat{Y}_i)$ of $(X_i,Y_i)$.
The worst-case local decodability and error probability are defined as
\begin{equation}\label{eq:def_wc_localdecodability}
\rwc \defeq \max_{1\leq i\leq n}d(i),
\end{equation}
where $d(i)\defeq |\cI_i |$, and 
$$\peld \defeq \max_{1\leq i\leq n}\Pr[(\hat{X}_i,\hat{Y}_i)\neq(X_i,Y_i)].$$
Note that $\cI_i$ may contain  different sets of components from $C^{nR_1}$ and $C^{nR_2}$, but these components should be chosen non-adaptively; conditioned on the index $i$, set $\cI_i$ should be independent of $(X^n,Y^n)$. Note also that a sequence of $\rwc$ adaptive (random) queries takes at most $2^{\rwc}$ different values.\footnote{A sequence of $\rwc$ random adaptive queries can be represented as a complete binary decision tree of depth $\rwc$, where any node (including the root and the leaves) is labelled with a codeword component (among the $n(R_1+R_2)$ possible), and where each edge is labelled $0$ or $1$. Any instance of $\rwc$ adaptive queries describes one of the $2^\rwc$ path from the root to a leaf.} Therefore, a lower bound on the probability of error for locality-$\rwc$ nonadaptive decoders (the main contribution of this paper), translates into a lower bound for locality-$\log \rwc$ adaptive decoders. Finally, notice that even though $\cI_i$ is non-adaptively chosen, it could still be a random set, in which case $d(i)$ is defined as the essential supremum of $|\cI_i|$.

{\remark Note that the notation ${\cI_i}$ leaves out any reference to the underlying sources. In particular, if both sources $X^n$ and $Y^n$ are compressed, then the set ${\cI_i}$ may contain coordinates from both $C^{nR_1}$ and $C^{nR_2}$, and if only source $X^n$ is compressed, then the set ${\cI_i}$ contains components from $C^{nR_1}$ only. }
\subsubsection{Strong vs. weak locality}\label{sweak}
 A rate pair $(R_1,R_2)$ is said to be achievable with \emph{strong locality} if $$\peld = o(1)\quad \text{and}\quad \rwc = \Theta(1)\quad \text{as}\:\: n \to \infty.$$ That is, by probing only a constant number (independent of~$n$) of symbols, the error probability of the local decoder goes to zero as the blocklength increases.  By contrast, $(R_1,R_2)$ is said to be achievable with \emph{weak locality} if $$\peld=o(1)\quad \text{and} \quad \rwc=\omega(1)\quad \text{as}\:\: n\to \infty.$$ 
 
 Weak locality is always achievable through the concatenation scheme where source sequences $X^n$ and $Y^n$ are decomposed into length $b$ sequences $$X^b(j)\defeq X_{(j-1)b+1}^{jb}\defeq (X_{(j-1)b+1},X_{(j-1)b+1},\ldots,X_{jb})$$  $$Y^b(j)\defeq Y_{(j-1)b+1}^{jb}\defeq (Y_{(j-1)b+1},Y_{(j-1)b+1},\ldots,Y_{jb})$$ for $j=1,2,\ldots$ and each block $(X^b(j),Y^b(j))$ is independently compressed using a Slepian-Wolf code operating at the desired $(R_1,R_2)$. Given $i\in[n]$, the local decoder decodes block $j=\lceil i/b\rceil$ (thereby reading $b(R_1+R_2)$ compressed bits), and outputs the estimates of the $i$-th bit of $X^n$ and $Y^n$. By letting $\rwc=b(R_1+R_2)$ in Theorem~\ref{thm:slepianwolf} we get:
\begin{corollary}[Concatenation]
\label{lemma:concatenation}
For any source $p_{XY}$ and any $(R_1,R_2)$ in the interior of the rate region \eqref{eq:sw_region}, the concatenation scheme achieves weak locality:
  \[
    \peld \leq 2^{-\Theta(\rwc)}.
  \]
\end{corollary}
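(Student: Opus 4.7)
The plan is to invoke Theorem~\ref{thm:slepianwolf} on each individual block and observe that the local error probability is upper bounded by a single block's block-error probability. Concretely, I would fix a block size $b$ (to be chosen as a function of $n$) and pick rates $(R_1', R_2')$ strictly in the interior of the Slepian-Wolf region with $R_1' \leq R_1$ and $R_2' \leq R_2$, then apply a blocklength-$b$ Slepian-Wolf code (guaranteed by Theorem~\ref{thm:slepianwolf}) to each pair $(X^b(j), Y^b(j))$ independently for $j = 1, \ldots, \lceil n/b \rceil$. The global encoder is the concatenation of these per-block sub-codewords, so the overall rates $R_1, R_2$ are respected and the two sub-codewords corresponding to any block $j$ occupy disjoint, explicitly identifiable bit positions within $C^{nR_1}(X^n)$ and $C^{nR_2}(Y^n)$.

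For a query $i \in [n]$, the local decoder reads exactly the two sub-codewords associated with block $j = \lceil i/b \rceil$, totaling $b(R_1' + R_2') = \Theta(b)$ bits, runs the block Slepian-Wolf decoder for that block, and returns the coordinate of the reconstructed pair corresponding to $i$. A decoding error at position $i$ implies a full block decoding error for block $j$, and by Theorem~\ref{thm:slepianwolf} the latter event has probability at most $2^{-b(E - \varepsilon)}$ for some $E = E(R_1', R_2', p_{XY}) > 0$. Since $\rwc = b(R_1' + R_2')$, substituting $b = \Theta(\rwc)$ yields $\peld \leq 2^{-\Theta(\rwc)}$, as claimed.

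There is no genuine obstacle in this argument, since the corollary is essentially a plug-in of the exponential-error Slepian-Wolf theorem; the only minor point worth noting is that $b$ must be allowed to grow with $n$, because for $b$ fixed independently of $n$ the per-block error probability would be a positive constant, violating the weak locality requirement $\peld = o(1)$. Any choice $b = \omega(1)$ with $b = o(n)$ simultaneously realizes weak locality and the claimed exponential decay of $\peld$ in $\rwc$. Minor bookkeeping (e.g., the last block when $b$ does not divide $n$, which can be absorbed by padding or by a slightly larger final block) does not affect the asymptotics.
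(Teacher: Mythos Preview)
Your proposal is correct and follows essentially the same route as the paper: apply Theorem~\ref{thm:slepianwolf} to each length-$b$ sub-block, note that a local error at position $i$ implies a block error for block $\lceil i/b\rceil$, and substitute $\rwc=b(R_1+R_2)=\Theta(b)$ into the exponential block-error bound. Your added remarks about letting $b$ grow with $n$ and about the last partial block are accurate refinements but do not change the argument.
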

 
 \subsection{Statement of the problem}
 By contrast with weak locality, whether strong locality is generally achievable is much less clear. In fact, it is only recently that strong locality was shown to be achievable for the single source setup at any lossless compression rate $R>H(X)$  \cite{mazumdar2015local,tatwawadi18isit_universalRA}. For the Slepian-Wolf setup at hand, this result implies  that strong locality holds for any  $(R_1,R_2)$ such that $R_1>H(X)$ and $R_2> H(Y)$. In this regime, sources can be encoded using the single source strongly local codes of \cite{mazumdar2015local,tatwawadi18isit_universalRA}, separately for source $X^n$ and source $Y^n$---and ignore dependency between $X^n$ and $Y^n$. Does this conclusion extend to the regime where at least one of the sources is encoded at a rate below its entropy?

\section{Main results}\label{mainresultato}
Our main result answers the above question in the negative: if the source is ``confusable'', strong locality is impossible whenever one of the sources is compressed below its entropy.

\begin{definition}[Source confusability]
Source  $p_{XY}$ is said to be $\cX$-confusable if for every $x_1,x_2\in {\cal{X}}$, there exists $y\in {\cal{Y}}$ such that $p_{X|Y}(x_1|y)>0$ and $p_{X|Y}(x_2|y)>0$---recall that $p_Y(y)>0$ for any $y\in~{\cal{Y}}$, see Section~\ref{sces}.
\end{definition}
Any source with full support, {\it{i.e.}}, such that $p_{XY}(x,y)>0$ for all $(x,y)\in {\cal{X}}\times {\cal{Y}}$, is both $\cX$- and $\cY$-confusable. An example of an $\cX$-confusable source which does not have full support is $p_{XY}$ where $p_Y=$ Bernoulli($p$), $0<p<1$, and where $p_{X|Y}$ is a $Z$ channel with crossover parameter $0<\varepsilon<1$. Instead, if $p_{X|Y}$ is the erasure channel, source $p_{XY}$ is not $\cX$-confusable. If ${\cal{X}=\cal{Y}}=\{0,1\}$, source $p_{XY}$ is always $\cX$-confusable except if $p_{X|Y}$ is the noiseless channel.

\begin{theorem}\label{thm:converse_general}
  Suppose source $ p_{XY}$ is $\cX$-confusable. Suppose $X^n$ is encoded into codeword $C^{nR_1}$ with $R_1<H(X)$, and suppose the code has a local decoder with worst-case local decodability $\rwc\in [n]$. Then 
  $$\max_{1\leq i\leq n}\Pr[\hat{X}_i(C_{{\cI}_i},Y^n)\neq X_i]\geq 2^{-\Theta(\rwc)},$$
  where $\hat{X}_i(C_{{\cI}_i},Y^n)$ denotes any estimator of source symbol $X_i$ given observations $C_{{\cI}_i}$ and $Y^n$. 
\end{theorem}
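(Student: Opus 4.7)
The plan is a two-hypothesis Bayes-error lower bound combined with a counting step that exploits the rate deficit $R_1 < H(X)$. Fix $i\in[n]$ and let $\phi_i:\cX^n\to\{0,1\}^{|\cI_i|}$ denote the restriction $x^n\mapsto C_{\cI_i}(x^n)$ of the codeword to the at most $\rwc$ bit positions probed by the local decoder at coordinate~$i$. For any two distinct symbols $x_0,x_1\in\cX$, the MAP estimator (and hence any estimator) $\hat X_i(\phi_i(X^n),Y^n)$ obeys the Le Cam-style bound
\[
\Pr[\hat X_i\neq X_i]\;\geq\;\sum_{z,\,y^n}\min\!\Big\{\Pr[X_i=x_0,\phi_i(X^n)=z,Y^n=y^n],\;\Pr[X_i=x_1,\phi_i(X^n)=z,Y^n=y^n]\Big\}.
\]

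The next step is to invoke $\cX$-confusability: pick $x_0,x_1\in\cX$ and $y^*\in\cY$ with $p_{XY}(x_0,y^*),p_{XY}(x_1,y^*)>0$, and set $\delta:=\min\{p_{XY}(x_0,y^*),p_{XY}(x_1,y^*)\}>0$. Restricting the $y^n$-sum to those with $y_i=y^*$, using the i.i.d.\ factorization of $(X^n,Y^n)$, the pointwise inequality $\min(pa,qb)\geq\min(p,q)\min(a,b)$, and summing out $y_{-i}^n$ via $\sum_{y_j}p_{XY}(x_j,y_j)=p_X(x_j)$ for $j\neq i$, the bound collapses to
\[
\Pr[\hat X_i\neq X_i]\;\geq\;\delta\cdot P_i,\qquad P_i\;:=\;\Pr_{X_{-i}^n\sim p_X^{\,n-1}}\!\Big[\phi_i(X_{-i}^n,x_0)\;=\;\phi_i(X_{-i}^n,x_1)\Big],
\]
where $\phi_i(x_{-i}^n,x)$ means $\phi_i$ evaluated on the sequence having $x$ at position $i$ and $x_{-i}^n$ elsewhere. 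This reduces the theorem to showing $\max_i P_i\geq 2^{-\Theta(\rwc)}$: for some coordinate, the $\rwc$-bit projection $\phi_i$ is insensitive to a swap $x_0\leftrightarrow x_1$ at that coordinate with non-negligible probability.

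The remaining counting step is the main obstacle, and it is where $R_1<H(X)$ must be used. Intuitively, the global encoder $C^{nR_1}$ spends only $nR_1<nH(X)$ bits and therefore cannot faithfully record the value of $X_j$ at every coordinate, so for some $i$ the $\rwc$-bit projection $\phi_i$ is forced to collapse $x_0\leftrightarrow x_1$ at a $2^{-\Theta(\rwc)}$ fraction of $x_{-i}^n$'s. I plan to formalize this via the chain
\[
\sum_{i=1}^n I(X_i;\phi_i(X^n)\mid Y^n)\;\leq\;I(X^n;C^{nR_1}\mid Y^n)\;\leq\;nR_1,
\]
which uses independence of the $X_i$'s given $Y^n$, to argue that the per-coordinate residual entropies $H(X_i\mid\phi_i(X^n),Y^n)$ cannot all be too small. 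The delicate part is converting this conditional entropy statement into a quantitative lower bound on $P_i$; I expect a Fano/Pinsker-style argument exploiting that $\phi_i$ takes values in a set of size at most $2^{\rwc}$, which is precisely what yields the $2^{-\Theta(\rwc)}$ factor (the projection can sharpen a posterior by at most a factor $2^{\rwc}$). Combining the resulting bound $\max_i P_i\geq 2^{-\Theta(\rwc)}$ with $\Pr[\hat X_i\neq X_i]\geq\delta\cdot P_i$ and maximizing over $i$ then concludes the proof.
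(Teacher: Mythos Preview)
Your Le Cam step and the reduction to the collision probability
\[
P_i \;=\; \Pr_{X_{-i}^n\sim p_X^{\,n-1}}\bigl[\phi_i(X_{-i}^n,x_0)=\phi_i(X_{-i}^n,x_1)\bigr]
\]
are both valid (indeed $\min\{q_0,q_1\}$ is bounded below by the common contribution coming from those $x_{-i}^n$ with $\phi_i(x_{-i}^n,x_0)=\phi_i(x_{-i}^n,x_1)$, and summing out $y_{-i}^n$ then gives exactly $P_i$). The fatal gap is in the third step: the claim $\max_i P_i \geq 2^{-\Theta(\rwc)}$ is \emph{false} in general, so no Fano/Pinsker manipulation can rescue it. Take $\cX=\{0,1\}$ with $p_X$ uniform, any $\cX$-confusable $p_{XY}$ (say the doubly symmetric binary source), and the encoder
\[
C^{nR_1}(X^n)=(X_1\oplus X_2,\;X_3\oplus X_4,\;\ldots,\;X_{n-1}\oplus X_n),\qquad R_1=\tfrac12<1=H(X).
\]
The natural local decoder probes the single parity bit containing $X_i$, so $\rwc=1$ and $\phi_i(x^n)=x_{2\lceil i/2\rceil-1}\oplus x_{2\lceil i/2\rceil}$. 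Then for every $i$ and every $x_{-i}^n$, $\phi_i(x_{-i}^n,0)\neq \phi_i(x_{-i}^n,1)$, so $P_i=0$ for all $i$, and your bound $\Pr[\hat X_i\neq X_i]\geq \delta\cdot P_i$ is vacuous. Yet the theorem does hold here: given $X_1\oplus X_2$ and $Y^n$, the MAP error for $X_1$ is a positive constant depending on $\rho$. The chain $\sum_i I(X_i;\phi_i\mid Y^n)\leq nR_1$ you wrote down is correct, but it cannot imply $\max_i P_i>0$, precisely because of this example.

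The underlying reason your reduction is too lossy is that by freezing $X_{-i}^n$ and only toggling coordinate $i$, you capture one failure mode of the decoder (the projection being literally insensitive to $X_i$) while discarding the other: $\phi_i$ may be perfectly sensitive to $X_i$ yet still useless, because it also depends on $X_{-i}^n$, which the decoder sees only noisily through $Y_{-i}^n$. The paper handles both modes at once by introducing a coupled copy $\tilde X^n$ with $X^n - Y^n - \tilde X^n$ Markov and $p_{\tilde X|Y}=p_{X|Y}$, and lower-bounding the error by
\[
\Pr\bigl[X_i=\bar x,\;\tilde X_i\neq \bar x,\;C_{\cI_i}(X^n)=C_{\cI_i}(\tilde X^n)=c\bigr],
\]
i.e.\ a collision of the \emph{entire} projection under two correlated source realizations rather than a single-coordinate swap. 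Because $p_{X\tilde X}$ has full support whenever $p_{XY}$ is $\cX$-confusable, reverse hypercontractivity then converts the two marginal events $\{X_i=\bar x,\,C_{\cI_i}(X^n)=c\}$ and $\{\tilde X_i\neq\bar x,\,C_{\cI_i}(\tilde X^n)=c\}$ (each of probability $\geq 2^{-\Theta(\rwc)}$ by the rate-distortion counting argument you anticipated) into a joint probability of the same order. That coupling-plus-hypercontractivity step is the idea your outline is missing.
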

 This result says that if the source is $\cX$-confusable, then strong locality is impossible whenever $R_1<H(X)$; not even the full cooperation of the $Y$-transmitter through the uncompressed source $Y^n$ allows to achieve strong locality. A particular version of this theorem for doubly symmetric sources, where $p_X$ is the Bernoulli($1/2$) distribution and where $p_{Y|X}$ corresponds to a BSC($\rho$) for some crossover parameter $0<\rho<1/2$, was proved in~\cite{vatedka2022_conferenceversion}. Finally note that for adaptive probing the lower bound given in Theorem~\ref{thm:converse_general} becomes
 $$\max_{1\leq i\leq n}\Pr[\hat{X}_i(C_{{\cI}_i},Y^n)\neq X_i]\geq 2^{-\Theta(2^\rwc)}.$$
Hence, if the source is $\cX$-confusable and if $R_1<H(X)$ strong locality cannot be achieved even under adaptive probing.
The $\cX$-confusability property turns out to be necessary for Theorem~\ref{thm:converse_general} to hold:
\begin{theorem}\label{lemma:yconfusable_necessary}
  Suppose source $p_{XY}$ is not $\cX$-confusable.
  Then, strong locality is achievable at some $R_1<H(X)$ and $R_2=H(Y)$.
\end{theorem}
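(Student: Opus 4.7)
\emph{Proof plan.} Since $p_{XY}$ is not $\cX$-confusable, fix two distinct symbols $x_1, x_2 \in \cX$ whose conditional $Y$-supports $\cY(x_j) \defeq \{y \in \cY : p_{XY}(x_j, y) > 0\}$ are disjoint. The plan is to compress $X^n$ after merging $x_1$ and $x_2$ into a single auxiliary symbol $*$: this strictly shrinks the entropy of the $X$-source, and the decoder can resolve the remaining symbol-level ambiguity between $x_1$ and $x_2$ using $Y_i$ alone, one index at a time.

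Formally, let $\cX' \defeq (\cX \setminus \{x_1, x_2\}) \cup \{*\}$, define $\phi : \cX \to \cX'$ by $\phi(x_1) = \phi(x_2) = *$ and $\phi(x) = x$ otherwise, and set $X'_i \defeq \phi(X_i)$. Since $\phi$ is non-injective on the support of $X$ and $p_X(x_1), p_X(x_2) > 0$ by the definition of $\cX$, we have $H(X') < H(X)$ strictly. The sequence $X'^n$ is i.i.d., so by \cite{mazumdar2015local, tatwawadi18isit_universalRA} it can be compressed with strong locality at any rate strictly above $H(X')$; in particular, pick $R_1$ with $H(X') < R_1 < H(X)$. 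We separately compress $Y^n$ using the same single-source strongly local scheme at rate arbitrarily close to $H(Y)$; the assertion ``$R_2 = H(Y)$'' in the theorem should be read as $R_2 = H(Y) + \varepsilon$ for any $\varepsilon > 0$, in line with the approach-entropy regime of those references.

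The local decoder for $X_i$ probes $O(1)$ bits of the $X$-compression to obtain $\widehat{X}'_i$ via the single-source local decoder for $X'^n$, and $O(1)$ bits of the $Y$-compression to obtain $\widehat{Y}_i$ via the single-source local decoder for $Y^n$. It outputs $\widehat{X}_i = \widehat{X}'_i$ whenever $\widehat{X}'_i \ne *$, and otherwise outputs the unique $x_j \in \{x_1, x_2\}$ with $\widehat{Y}_i \in \cY(x_j)$ (arbitrary if no such $j$ exists). Correctness follows by a union bound: on the event (of probability $1 - o(1)$) that both single-source local decoders succeed, either $X_i \notin \{x_1, x_2\}$ and $\widehat{X}_i = X_i$ directly, or $X_i \in \{x_1, x_2\}$, in which case $Y_i \in \cY(X_i)$ with probability one and the disjointness of $\cY(x_1), \cY(x_2)$ identifies $X_i$ uniquely. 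The total probe count is $O(1) + O(1) = O(1)$, so strong locality holds at the rate pair $(R_1, H(Y) + \varepsilon)$ with $R_1 < H(X)$, as required.

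There is no substantive obstacle in this direction; the content of the theorem is really the recognition that non-$\cX$-confusability provides exactly the structural feature allowing a pointwise, $Y_i$-driven disambiguation to absorb the rate cost of coarsening $\cX$, and that such pointwise disambiguation is compatible with constant-probe decoding. The strict entropy gap $H(X) - H(X') > 0$ coming from the merger of two positive-probability symbols provides the headroom that pushes $R_1$ below $H(X)$.
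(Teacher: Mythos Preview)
Your proposal is correct and follows essentially the same approach as the paper: merge the two symbols $x_1,x_2$ with disjoint $Y$-supports into one, compress the resulting reduced source and $Y^n$ separately with the single-source strongly local codes of \cite{mazumdar2015local,tatwawadi18isit_universalRA}, and use $Y_i$ to disambiguate when needed. Your write-up is in fact slightly more explicit about the local decoder and correctly flags that $R_2=H(Y)$ is to be read as $H(Y)+\varepsilon$.
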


From Corollary~\ref{lemma:concatenation}, for any $(R_1,R_2)$ in the interior of the rate region the concatenation scheme achieves a local error probability that decays as $2^{-\Theta(\rwc)}$, and this is order optimal by Theorem~\ref{thm:converse_general} for confusable sources. However, note that the local decoding error-probability of the concatenation scheme is tied to a specific value of $\rwc$ which is equal to the sub-block length $b$. In particular, if $b=\Theta(1)$, then, because the concatenation scheme encodes each sub-block independently, it is impossible to recover $(X^n,Y^n)$ with vanishing probability of error as $n$ grows, even after probing the entire compressed sequences(!) To lower the error-probability, the parameter $b$, hence the encoding procedure, should be modified accordingly.

Our second contribution is a compression scheme whose local decoder has an error probability that decreases as the number of probed symbols increases, without changing the encoding. The performance of this scheme is given in the following theorem:
\begin{theorem}\label{thm:achievability}
  For any $(R_1,R_2)$ in the interior of the rate region, there exists a rate $(R_1,R_2)$  encoder and a local decoder such that for every $1>\eta>2^{-2^{O(\log n)}}$ the local decoder achieves $\peld\leq \eta$ while probing $\rwc=\mathrm{poly}(\log(1/\eta))$ bits.
\end{theorem}
Theorems~\ref{thm:converse_general},\ref{lemma:yconfusable_necessary}, and~\ref{thm:achievability} easily generalize to more than two sources, see Section~\ref{estensione}.

\emph{Note:} The present paper differs from the ISIT paper \cite{vatedka2022_conferenceversion} mainly in that it establishes the impossibility of strong locality (Theorem~\ref{thm:converse_general}) for the most general class of sources (confusable sources), and not only for the specific class of doubly symmetric binary sources. In fact, the arguments used in \cite{vatedka2022_conferenceversion} do not extend beyond sources with full-support. The arguments used here are not only more general, but also more direct than those in \cite{vatedka2022_conferenceversion}. Theorem~\ref{lemma:yconfusable_necessary} is new and \cite{vatedka2022_conferenceversion} contains mostly a sketch of the proof of Theorem~\ref{thm:achievability}. Theorems~\ref{ext1}, \ref{ext2}, and \ref{ext3} that extend the above results to more than two sources (see Section~\ref{estensione}) did not formally appear in \cite{vatedka2022_conferenceversion}.

\section{Proof of Theorem~\ref{thm:converse_general}}\label{prova1}
\subsection{Preliminaries}

One key element in proving Theorem~\ref{thm:converse_general} is the following coupling. Given $p_{X,Y}$ define random variable $\widetilde{X}$ so that $$X-Y-\widetilde{X}$$ forms a Markov chain and so that $$p_{X|Y}=p_{\widetilde{X}|Y}.$$ 
Observe that if $p_{XY}$ is $\cX$-confusable, then for any given $(x,\widetilde{x})\in {\cal{X}}\times {\cal{X}}$ there exists $y$, with $p_Y(y)>0$ (recall that without loss of generality $p_Y(y)>0$ for any $y\in {\cal{Y}}$), such that $$p_{X,\widetilde{X}}(x,\widetilde{x}|y)=p_{X|Y}(x|y)p_{\widetilde{X}|Y}(\wtx |y)>0.$$ Hence, we have:
\begin{lemma}\label{lemma:pxx'}
 If $p_{XY}$ is $\cX$-confusable, then $p_{X\wtX}$ has full support.
\end{lemma}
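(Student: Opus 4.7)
The plan is to simply unpack the definition of $p_{X\wtX}$ implied by the coupling and then match it term-by-term with the $\cX$-confusability hypothesis. Because $X-Y-\wtX$ is a Markov chain with $p_{X|Y}=p_{\wtX|Y}$, summing over the middle variable gives
\[
  p_{X\wtX}(x,\wtx) \;=\; \sum_{y\in\cY} p_{X|Y}(x|y)\,p_{\wtX|Y}(\wtx|y)\,p_Y(y) \;=\; \sum_{y\in\cY} p_{X|Y}(x|y)\,p_{X|Y}(\wtx|y)\,p_Y(y).
\]
Every term in this sum is nonnegative, so it suffices to exhibit, for each fixed pair $(x,\wtx)\in\cX\times\cX$, a single value $y^\ast\in\cY$ that makes one summand strictly positive.

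First I would fix an arbitrary $(x,\wtx)\in\cX\times\cX$ and invoke $\cX$-confusability with the pair $(x_1,x_2)=(x,\wtx)$. This produces some $y^\ast\in\cY$ with $p_{XY}(x,y^\ast)>0$ and $p_{XY}(\wtx,y^\ast)>0$. Then $p_Y(y^\ast)\ge p_{XY}(x,y^\ast)>0$, and accordingly $p_{X|Y}(x|y^\ast)=p_{XY}(x,y^\ast)/p_Y(y^\ast)>0$ and $p_{X|Y}(\wtx|y^\ast)>0$, so the single summand corresponding to $y=y^\ast$ already contributes a strictly positive amount to $p_{X\wtX}(x,\wtx)$. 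This gives $p_{X\wtX}(x,\wtx)>0$ for every $(x,\wtx)$, which is exactly full support.

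Since $(x,\wtx)$ was arbitrary, the conclusion follows. Frankly, there is no real obstacle here: the lemma is essentially the observation that $\cX$-confusability is precisely the statement that any two rows of the conditional matrix $p_{X|Y}(\cdot|\cdot)$ share a column on which both are positive, and the coupling $p_{X\wtX}$ was engineered exactly to integrate such shared columns against the strictly positive marginal $p_Y$ (recall the standing assumption that $\cY=\{y:p_Y(y)>0\}$). The only minor care needed is to verify that the two factorizations of $p_{X\wtX}$ agree and that conditioning on $y^\ast$ is well-defined, both of which are immediate from the normalization $p_Y(y^\ast)>0$.
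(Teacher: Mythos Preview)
Your proof is correct and follows essentially the same approach as the paper: both use $\cX$-confusability to produce, for each pair $(x,\wtx)$, a single $y^\ast$ with $p_Y(y^\ast)>0$ for which the summand $p_{X|Y}(x|y^\ast)p_{X|Y}(\wtx|y^\ast)p_Y(y^\ast)$ is strictly positive. The paper's argument is simply a more compressed version of what you wrote.
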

In turn, since distributions with full support (and finite alphabet) are reverse hypercontractive \cite[Theorem 1]{kamath_reverse_2015}, we get:
\begin{lemma}\label{lemma:reversehypercontractivity}
 If $p_{XY}$ is $\cX$-confusable then, 
  for every $\cA,\cB\subset {\cal{X}}^n$, we have
  \[
    \Pr[X^n\in\cA,\widetilde{X}^n\in\cB] \geq \left( \Pr[X^n\in\cA] \right)^{\alpha} \left( \Pr[\widetilde{X}^n\in\cB] \right)^{\beta}
  \]
  for some finite constants $\alpha,\beta$.\footnote{More precisely, $\alpha$ and $\beta$ are in $(1,\infty)$ (see \cite{kamath_reverse_2015}), but for our purpose the values of $\alpha$ and $\beta$ (as functions of $p_{X\widetilde{X}}$) are irrelevant.}
\end{lemma}
The other key element in proving Theorem~\ref{thm:converse_general} is the following general lemma:
\begin{lemma}\label{lemma:pmap_pxx'}
Fix source $p_{XY}$. Suppose $X^n$ is encoded into codeword $C^{nR_1}$ at some rate $R_1\geq 0$. Fix $i\in [n]$ and let $\hat{X}_i(C_{\cI_i},Y^n)$ be an estimator of $X_i$ given $C_{\cI_i}$ and $Y^n$. Then, for any realization $c$ of $C_{\cI_i}$, we have:
 \begin{align}
  \Pr&(\hat{X}_i(C_{\cI_i},Y^n)\ne X_i,C_{\cI_i}=c) \notag\\
  &\geq \Pr[X_i = \bar{x},{C}_{\cI_i}=c,\widetilde{X}_i\ne \bar{x},\tilde{C}_{\cI_i}=c]
  \notag \end{align}
where  
\begin{align}\label{xmax}
\bx \defeq \arg\max_{x}\Pr[X_i=x|C_{\cI_i}=c],
\end{align}
and where $\tilde{C}_{\cI_i}$ is obtained by encoding $\widetilde{X}^n$ with the same code as for $X^n$. 
\end{lemma}
The last ingredient for proving Theorem~\ref{thm:converse_general} is the following result which follows from a basic rate-distortion argument:
\begin{lemma}
 Suppose $X^n$ is encoded into codeword $C^{nR_1}$ at some rate $R_1<H(X)$. Suppose the code has a local decoder with worst-case local decodability $\rwc=\max_{1\leq i\leq n} d(i) \in [n]$.	Then, there exists a constant $\delta>0$ that depends only on $R$ (and $p_X$), an index $i\in [n]$, and a realization $c$ of $C_{\cI_i}$ such that
\[
\Pr[\hat{X}_{i}(C_{\cI_i})\neq X_i,C_{\cI_i}=c] \geq \delta 2^{-{\rwc}}.
\]
	\label{lemma:worsti}
\end{lemma}
Lemmas~\ref{lemma:pmap_pxx'} and ~\ref{lemma:worsti}
are proved in Section~\ref{apendiccio}.
\subsection{Proof of Theorem~\ref{thm:converse_general}}\label{sec:general_converse_nonhypercontractive}
  Suppose the source is confusable and suppose $X^n$ is compressed at rate $R_1<H(X)$.
Using Lemma~\ref{lemma:pmap_pxx'} then Lemma~\ref{lemma:reversehypercontractivity}, we have that for some finite constants $\alpha$ and $\beta$, any index $i\in [n]$, and any realization $c$ of $C_{\cI_i}$
\begin{align}\label{alianza}
  &\Pr(\hat{X}_i(C_{\cI_i},Y^n)\ne X_i,C_{\cI_i}=c) \notag\\
  &\geq (\Pr[X_i = \bar{x},C_{\cI_i}=c])^{\alpha}\notag\\
    &\hspace{3cm}\times (\Pr[\widetilde{X}_i\ne \bar{x},\tilde{C}_{\cI_i}=c])^{\beta}
 \end{align}
 where $\bar{x}$ is defined in \eqref{xmax}.
 
As a last step, we now show that, for some index $i\in [n]$ and some realization $c\in \{0,1\}^{d(i)}$, each of the two probability terms on the right-hand side of the inequality \eqref{alianza} is lower bounded by $2^{-\Theta(\rwc)}$. By summing both sides of inequality \eqref{alianza} over $c$'s, we then deduce that 
 \[
 \Pr(\hat{X}_{i}(C_{\cI_{i}},Y^n)\ne X_{i}) \geq 2^{-\Theta(\rwc)}
 \]
for some $i\in [n]$, thereby completing the proof of Theorem~\ref{thm:converse_general}.

Let us start with the second term. Since $\widetilde{X}^n$ has the same distribution as  ${X}^n$, from Lemma~\ref{lemma:worsti} 
there exist a constant $\delta>0$, an index $i\in  [n]$, and a local codeword $c\in \{0,1\}^{d(i)}$, with $d(i)\leq \rwc$, such that
\begin{equation}\label{eq:yi_notequals_a_2}
\Pr[\widetilde{X}_{i}\neq \bx,\tilde{C}_{\cI_{i}}=c] \geq \delta 2^{-{\rwc}}\geq \delta |\cX|^{-\rwc}, 
\end{equation}
where the second inequality in \eqref{eq:yi_equals_a_2} holds since $|{\cal{X}}|\geq 2$ (see Section~\ref{sces}).

For the first term, note that
\begin{equation}\label{eq:yi_equals_a_2}
\Pr[X_i= \bx|C_{\cI_{i}}=c] \geq \frac{1}{|\cX|}, 
\end{equation}
for otherwise the probabilities would not sum to one. From \eqref{eq:yi_notequals_a_2} and \eqref{eq:yi_equals_a_2} it then follows that
\begin{align}\label{ineqal_2}
\Pr[X_{i}&= \bx,C_{\cI_{i}}=c]\notag\\
&=\Pr[X_{i}= \bx|C_{\cI_{i}}=c]\Pr[C_{\cI_{i}}=c]\notag\\
&\geq  \frac{1}{|\cX|}\delta |\cX|^{-\rwc}\notag\\
&\geq \delta |\cX|^{-\rwc-1}. \end{align}
This establishes the desired claim.\hfill \qed
\subsection{Proofs of Lemmas~\ref{lemma:pmap_pxx'} and \ref{lemma:worsti}}\label{apendiccio}

\begin{proof}[Proof of Lemma~\ref{lemma:pmap_pxx'}]
 For any estimator $\hat{X}_i(C_{\cI_i},Y^n)$ of $X_i$, we have 
  \begin{align}\label{nabucco}
   &\Pr(\hat{X}_i(C_{\cI_i},Y^n)\ne X_i,\; C_{\cI_i}=c) \notag\\
   &\geq \Pr(\hat{E}(C_{\cI_i},Y^n)\ne E(X_i),\; C_{\cI_i}=c)\nonumber \\
    &=\sum_{y^n}\Pr(\hat{E}(C_{\cI_i},Y^n)\ne E(X_i), \; C_{\cI_i}=c |Y^n = y^n)\nonumber\\
    &\hspace{3cm}\times  \Pr(Y^n=y^n)\nonumber \\
     &\geq  \sum_{y^n} \Pr(Y^n=y^n)\nonumber\\ &\times \min \Big\{ \Pr[X_i= \bar{x},\; C_{\cI_i}=c|Y^n = y^n], \notag\\
     &\hspace{1.2cm}\Pr[X_i\ne\bar{x},\; C_{\cI_i}=c|Y^n = y^n] \Big\}\,,
  \end{align}
  where $\hat{E}(C_{\cI_i},Y^n)$ is an estimator of the binary random variable $E(X_i)$, defined to be equal to zero if $X_i=\bx$ and one if $X_i\ne \bx$; and where the right-hand side of the second inequality is the error probability of the optimal (MAP) estimator $\hat{X}$ with the foreknowledge of $Y^n$.

 By multiplying the minimum on the right-hand side by the maximum of the same terms (which is at most one), we get
  \begin{align*}
     &\Pr(\hat{X}_i(C_{\cI_i},Y^n)\ne X_i,\; C_{\cI_i}=c) \\
     &\geq \sum_{y^n} \Pr(Y^n=y^n) \bigg(\min\{ \cdot \} \times\max\{ \cdot \} \bigg)\\
     &= \sum_{y^n} \Pr(Y^n=y^n) \Pr[X_i=\bar{x},C_{\cI_i}=c|Y^n = y^n]\\
     &\hspace{2.7cm}\times \Pr[X_i\ne\bar{x},C_{\cI_i}=c|Y^n = y^n] \\
 &= \sum_{y^n} \Pr(Y^n=y^n) \Pr[X_i= \bar{x},C_{\cI_i}=c|Y^n = y^n]\\
    &\hspace{2.7cm} \times \Pr[\widetilde{X}_i= \bar{x},\tilde{C}_{\cI_i}=c|Y^n = y^n]\\
    &= \Pr[X_i = \bar{x},C_{\cI_i}=c, \widetilde{X}_i\ne \bar{x},\tilde{C}_{\cI_i}=c]
    \end{align*}
where the second equality holds since $p_{X|Y}=p_{\widetilde{X}|Y}$. This yields the desired result.
\end{proof}
\begin{proof}[Proof of Lemma~\ref{lemma:worsti}]
The converse to Shannon's lossy source coding theorem implies that if $R<H(X)$, then there exists a $\delta=\delta(R)>0$ such that 
	\[
	\bE d_H(X^n,\hat{X}^n) = \sum_{i=1}^{n}\Pr[\hat{X}_{i}(C_{\cI_i})\neq X_i ]\geq n \delta
	\]
	where $d_H(X^n,\hat{X}^n)$ denotes the Hamming distance between $X^n$ and $\hat{X}^n$.
	Hence,  $$\delta \leq \Pr[\hat{X}_{i}(C_{\cI_i}) \neq X_i ]$$
	for at least one index $i\in [n]$.
	
Expanding the right-hand side and assuming a worst-case local decodability  of $\rwc \in [n]$, we have
\begin{align*}
\delta &\leq \Pr[\hat{X}_{i}(C_{\cI_i})\neq X_i]\\
       &= \sum_{c\in \{0,1\}^{d(i)}} \Pr[\hat{X}_{i}(C_{\cI_i})\neq X_i,C_{\cI_i}=c]\\
       &\leq 2^{\rwc}\max_{c\in\{0,1\}^{d(i)}} \Pr[\hat{X}_{i}(C_{\cI_i})\neq X_i,C_{\cI_i}=c]
\end{align*}
which concludes the proof.
\end{proof}

\section{Proof of Theorem~\ref{lemma:yconfusable_necessary}}\label{confusable_below}
  If $p_{XY}$ is not $\cX$-confusable, then there exists $x_1,x_2\in \cX$ such that, for any $y\in \cY$,  either $p_{XY}(x_1,y)>0$ or $p_{XY}(x_2,y)>0$ (recall that without loss of generality $p_Y(y)>0$ for any $y\in {\cal{Y}}$). Therefore, conditioned on $X\in \{x_1,x_2\}$, the knowledge of $Y$ reveals $X$.
  
 Let $\cX=\{1,2,\ldots,|{\cal{X}}|\}$, and suppose without loss of generality that $(x_1,x_2)=(1,2)$. Define the new source $U^n$ over the reduced alphabet $\{2,3,\ldots,|{\cal{X}}| \}$ as
\[
  U_i = \begin{cases}
    2 &\text{ if }X_i\in \{1,2\}\\
     X_i &\text{ if }X_i\neq \{1,2\}.\\
  \end{cases}
\]
Clearly, $H(U)<H(X)$ and $(U,Y)$ determines $(X,Y)$. We can therefore compress $U^n$ and $Y^n$ independently at rates $R_1=H(U)<H(X)$ and $R_2=H(Y)$ using the compressors of~\cite{tatwawadi18isit_universalRA,mazumdar2015local} to achieve strong locality.

\section{Proof of Theorem~\ref{thm:achievability}}\label{sec:achievability}
We want a scheme that achieves the following: For any fixed $\delta>0$ and $(R_1,R_2)$ within the Slepian-Wolf rate region,
\begin{itemize}
\item The sequences $(X^n,Y^n)$ are independently compressed to rates $(R_1+\delta,R_2+\delta)$ respectively.
\item For any $i\in[n]$ and $1>\eta>2^{-2^{O(\log n)}}$ specified at the receiver, the local decoder probes $\mathrm{poly}(\log(1/\eta))$ compressed bits, and outputs $(\hat{X}_i,\hat{Y}_i)$ which satisfies
\[
\peld = \Pr[(\hat{X}_i,\hat{Y}_i)\neq (X_i,Y_i)] \leq \eta.
\]
\end{itemize}

Our coding scheme is inspired by that in~\cite{vatedka_local_2020}, and is a hierarchical compression scheme. The compressed bits consist of various blocks that are spread across multiple ``levels'' $1\leq \ell\leq \ell_{\max}$. The compressed bits at level $\ell=0$ is obtained by applying the concatenation scheme defined in Section~\ref{sweak} with $b=O(1)$. This guarantees that any pair of source symbols can be recovered with $2^{-\Theta(b)}=O(1)$ probability of error. The compressed bits at higher levels $\ell\geq 1$ can be viewed as additional refinement bits that are probed only when we desire a lower probability of error. By probing blocks corresponding to higher levels, we obtain a more reliable estimate of $(X_i,Y_i)$. The compressed blocks at level $\ell$ are obtained by using a random binning scheme applied to blocks of size $n_\ell$, where $n_{\ell}$ is growing superexponentially with $\ell$. However, the rates for higher levels is chosen to decay exponentially with $\ell$.  The key challenge is to choose the parameters carefully so that the additional bits corresponding to higher levels provide a negligible contribution to the overall compression rates.

\subsubsection{Parameters}
We choose\footnote{Since we only aim to get order-optimal results, we have not attempted to optimize over the various parameters.} a sufficiently small $\varepsilon_0>0$, positive integers
\begin{equation}\label{eq:ach_prf_bzero}
 b_0 = n_0 
\end{equation}
which are constants independent of $n$, and
\begin{equation}\label{eq:ach_prf_kzero}
\kza = \lceil (R_1+\varepsilon_0)b_0\rceil \text{ and }\kzb = \lceil (R_2+\varepsilon_1)b_1\rceil 
\end{equation}
such that the probability of error of a Slepian-Wolf code for sequences of length $b_0$ satisfies
$$\Pr[(\hat{X}^{b_0},\hat{Y}^{b_0})\neq (X^{b_0},Y^{b_0})]\leq 2^{-\beta \varepsilon_0 b_0}\leq \delta'$$ where $\beta>0$  depends on $p_{XY},R_1,R_2$ only and $0<\delta'<1$ is a parameter that determines an upper bound on the probability of local decoding error that can be achieved. 

For each $\ell=1,2,\ldots,\lmax$, define
\begin{align}
  \vel &= \velm/2 = \epsilon_0/2^\ell &\notag\\
  b_\ell &= 16b_{\ell-1} = 16^\ell b_0 &\notag\\
  n_\ell &= b_\ell n_{\ell-1} = 4^{\ell(\ell+1)}b_0^{\ell+1} & \notag\\
  \kla &= \vel n_{\ell}\left(\beta+|\cX| +\frac{b_\ell}{n_\ell}\log \frac{e2^\ell}{\varepsilon_0} \right) & \notag\\
 \klb &= \vel n_{\ell}\left(\beta+|\cY| +\frac{b_\ell}{n_\ell}\log \frac{e2^\ell}{\varepsilon_0} \right) & \label{eq:ach_prf_params}
\end{align}

\subsubsection{Codes for various levels}
At the heart of our construction is a multilevel random binning argument that can be described by a sequence of random codes $\cC_0,\cC_1,\ldots,\cC_{\lmax}$. 

The code $\cC_\ell$ at level $\ell$ consists of two encoders $\Phi_\ell:\cX^{n_\ell}\to \{0,1\}^{\kla}$ and $\Psi_\ell:\cY^{n_\ell}\to\{0,1 \}^{\klb}$, where $\kla$ and $\klb$ are as defined previously. For each $x^{n_\ell}\in \cX^{n_{\ell}}$, we assign a codeword $\Phi_{\ell}(x^{n_\ell})$ drawn uniformly at random from $\{0,1\}^{\kla}$. Similarly, for each $y^{n_\ell}\in \cY^{n_{\ell}}$, we assign a codeword $\Psi_{\ell}(y^{n_\ell})$ drawn uniformly at random from $\{0,1\}^{\klb}$. For any $u^{\kla}\in \{0,1\}^{\kla}$ and $v^{\klb}\in\{0,1\}^{\klb}$, we have $\Pr[\Phi_\ell(x^{n_\ell})=u^{\kla}]=2^{-\kla}$ and $\Pr[\Phi_\ell(x^{n_\ell})=v^{\klb}]=2^{-\klb}$. The codes are known to the decoder and the respective encoders.


\subsubsection{Encoder}
We now describe the encoding of the sequences $X^n$ and $Y^n$. Let us suppose that user 1 has $X^n$ and user 2 has $Y^n$. 
 
The codeword generated by each user comprises of various blocks spread over multiple levels, and the encoding is done independently at each level.
Consider any level $0\leq \ell\leq \lmax$. Each user partitions its source sequence into blocks of $n_\ell$ symbols each. For $i=1,2,\ldots,n/n_\ell$, define the $i$th (source) block at level $\ell$ to be $X^{n_\ell}(\ell,i)=X_{(i-1)n_\ell+1}^{in_\ell}$, and $Y^{n_\ell}(\ell,i)=Y_{(i-1)n_\ell+1}^{in_\ell}$. Let $U^{\kza}(\ell,i)\triangleq \Phi_\ell(X^{n_\ell}(\ell,i))$ be the $i$'th level-$\ell$ codeword for user 1, and $V^{\kza}(\ell,i)\triangleq \Psi_\ell(Y^{n_\ell}(\ell,i))$ be the $i$'th level-$\ell$ codeword for user 2.

The codeword for $X^n$ is obtained by taking the concatenation of all level $\ell$ codewords for $0\leq \ell \leq \lmax$. This is equal to $(U^{\kza}(\ell,i):0\leq \ell\leq i, 1\leq i\leq n/n_\ell)$. Similarly, the codeword for $Y^n$ is equal to $(V^{\kzb}(\ell,i):0\leq \ell\leq i, 1\leq i\leq n/n_\ell)$.

An illustration of the encoding process is provided in Fig.~\ref{fig:encoding_user1}.
The level-$0$ codewords correspond to the concatenation scheme, and most of the entropy of the compressed sequence lies in the level-0 codewords.  The level $\ell\geq 1$ codewords give extra information that allow us to reduce the probability of local decoding error. The rates $\kla/n_{\ell}$ and $\klb/n_{\ell}$ are exponentially decaying functions of $\ell$, and the overall sum rates of all the level $\ell=1,2,\ldots,\lmax$ codewords is negligible.

\begin{figure*}
    \centering
    \includegraphics[width=0.9\textwidth]{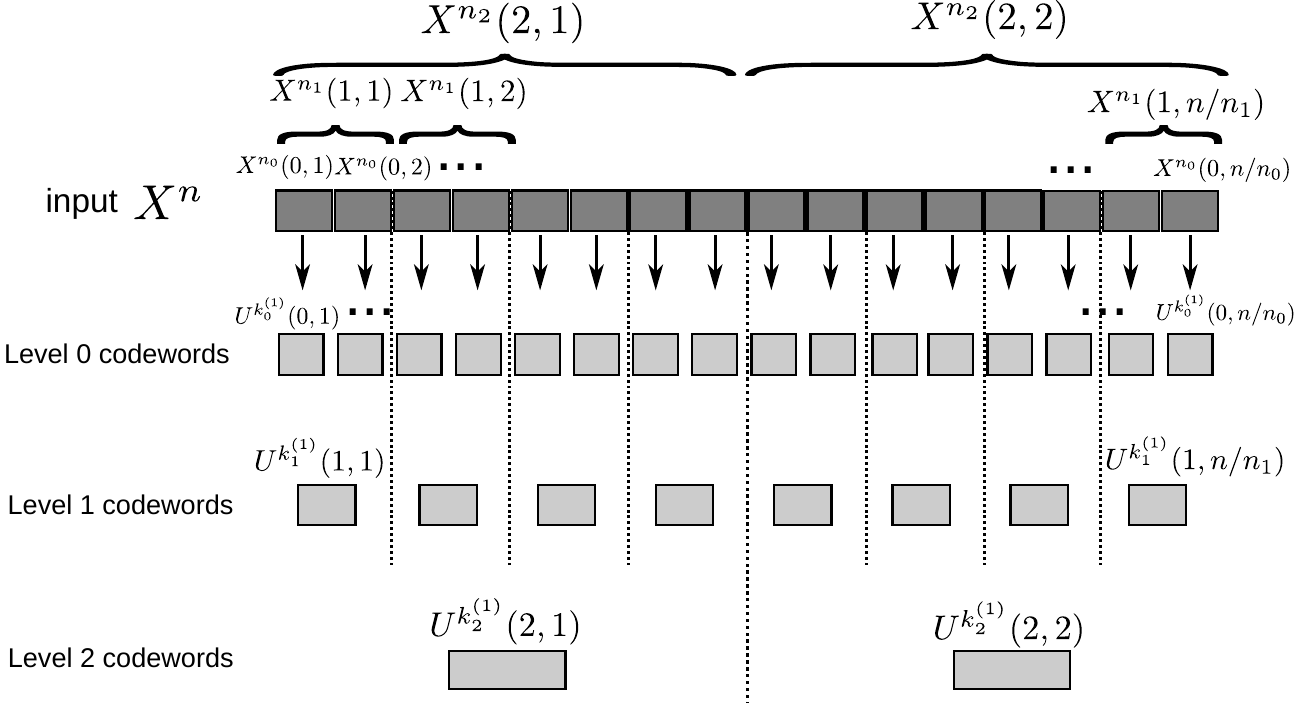}
    \caption{A depiction of the encoding structure for user 1. For ease of illustration, we have chosen $\lmax=2$, $b_2=4b_1$, and $b_1=2b_0$. To obtain the codewords at level $\ell$, the source $X^n$ is partitioned into blocks of $n_\ell$ symbols each, and the $j$'th level-$\ell$ codeword $U^{\kla}(\ell,j)=\Phi_\ell(X^{n_\ell}(\ell,j))$. The overall codeword is the concatenation of all the level $\ell=0,1,\ldots,\lmax$ codewords. A similar encoding process is performed by user $2$ but using $\Psi_\ell$. Consider decoding $X^{n_0}(0,2)$. The local decoder may choose to probe $U^{\kza}(0,2),V^{\kzb}(0,2)$ and use the standard Slepian-Wolf (joint typicality) decoder. If a lower probability of error is required, then it additionally probes $U^{\kla}(\ell,0),V^{\klb}(\ell,b)$ for $\ell=1$, or $\ell=1,2$ depending on the target probability of error. The additional bits are then used to refine the estimate of $X^{n_0}(0,2)$.}
    \label{fig:encoding_user1}
\end{figure*}

\subsubsection{Local decoder}
The local decoder takes two parameters as input: a location $i\in [n]$, and $\ell_d\in \{0,1,2,\ldots,\lmax\}$. The first parameter specifies which $(X_i,Y_i)$ the decoder wishes to recover. The second parameter specifies the number of bits to probe (which decides the probability of error). For a specified $\ell_d$, the local decoder probes $2^{O(\ell_d^2)}$ compressed bits, and the probability of error is $2^{-2^{\tilde{O}(\ell_d^2)}}$. This statement will be made more precise shortly.

The decoder works by probing compressed bits up to level $\ell_d$ as follows:
\begin{itemize}
\item The decoder first finds which level-$\ell_d$ chunk the desired location $i$ lies in. In other words, it sets $i_{\ell_d}=\lceil i/n_{\ell_d}\rceil$. It then reads all the compressed chunks up to level $\ell_d$ corresponding to the symbols $X_{(i_d-1)n_{\ell_d}+1}^{i_dn_{\ell_d}}$.
\item The decoder now iteratively improves its estimate by processing the compressed bits from level $0$ to level $\ell_d$ as follows:
  \begin{itemize}
  \item At level 0, the decoder uses the Slepian-Wolf decoder to obtain the level-0 estimates of $X_{(i_d-1)n_{\ell_d}+1}^{i_dn_{\ell_d}}$. Call this estimate as $\hat{X}_{(i_d-1)n_{\ell_d}+1}^{i_dn_{\ell_d}}(0)$.
  \item For all subsequent levels $\ell\in\{1,2,\ldots,\ell_d\}$, the decoder does the following. Suppose that for some $i$, we want to estimate $(X^{n_{\ell}}(\ell,i),Y^{n_{\ell}}(\ell,i))$ assuming that we already have the level $\ell-1$ estimate\footnote{The level $\ell-1$ estimate is obtained by decoding all the compressed bits up to level $\ell-1$ corresponding to $(X^{in_{\ell}}_{(i-1)n_{\ell}+1},Y^{in_{\ell}}_{(i-1)n_{\ell}+1})$.}. The decoder outputs $(\hat{X}^{{n_{\ell}}}(\ell,i),\hat{Y}^{{n_{\ell}}}(\ell,i))=(x^{n_{\ell}},y^{n_{\ell}})$ if $(x^{n_{\ell}},y^{n_{\ell}})$ is the unique pair of sequences which match $U^{\kla}(\ell,i),V^{\klb}(\ell,i)$ and also match at least $(1-\vel) b_{\ell}$ of the level-$(\ell-1)$ estimated blocks. If there is no such unique sequence, then the level-$\ell$ decoder outputs the zero sequence.
  \end{itemize}

\end{itemize}

In Lemma~\ref{lemma:ach_localdecodability}, we derive an upper bound on the local decodability. We then derive an upper bound on the probability of local decoding error in Lemma~\ref{lemma_ach_pe}. Combining the two gives us Theorem~\ref{thm:achievability}.

\begin{lemma}\label{lemma:ach_localdecodability}
  For any given parameters $(i,\ell_d)$, the number of bits probed by the local decoder is
  \[
    \rwc(\ell_d) \leq  b_0^{\ell_d+1}4^{\ell_d(\ell_d+1)}(R_1+R_2+\gamma_1\varepsilon_0) \leq 2^{\gamma_2\ell_d^2}
  \]
\end{lemma}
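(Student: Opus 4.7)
\textbf{Proof proposal for Lemma~\ref{lemma:ach_localdecodability}.} The plan is a direct bit-counting argument. First I would identify exactly which compressed bits are probed: for fixed $(i,\ell_d)$, the decoder touches the unique level-$\ell_d$ source chunk of length $n_{\ell_d}$ containing position $i$, and for every $\ell\in\{0,1,\ldots,\ell_d\}$ it reads all level-$\ell$ compressed blocks that lie inside this chunk. Since the level-$\ell$ blocks partition the chunk into $n_{\ell_d}/n_\ell$ sub-blocks of length $n_\ell$ (and at level $0$ into $n_{\ell_d}/b_0$ sub-blocks of length $b_0$), the contribution to $\rwc(\ell_d)$ is
\[
\frac{n_{\ell_d}}{b_0}(\kza+\kzb) \;+\; \sum_{\ell=1}^{\ell_d}\frac{n_{\ell_d}}{n_\ell}\bigl(\kla+\klb\bigr).
\]

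Next I would bound each term using the parameter choices \eqref{eq:ach_prf_kzero}--\eqref{eq:ach_prf_params}. The level-$0$ contribution is at most $n_{\ell_d}(R_1+R_2+2\varepsilon_0)+O(n_{\ell_d}/b_0)$, absorbing the ceilings into a single $\varepsilon_0$-term. For $\ell\ge 1$, the per-symbol rate at level $\ell$ is
\[
\frac{\kla+\klb}{n_\ell}\;=\;\vel\!\left(2\beta+|\cX|+|\cY|+\frac{2 b_\ell}{n_\ell}\log\frac{e\,2^\ell}{\varepsilon_0}\right),
\]
and because $\vel=\varepsilon_0/2^\ell$ decays geometrically while $b_\ell/n_\ell=1/n_{\ell-1}\le 1/b_0$, the tail sum $\sum_{\ell\ge1}\vel(\cdots)$ converges to a bound of the form $C\varepsilon_0$ with a constant $C$ depending only on $|\cX|,|\cY|,\beta,b_0$. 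Combining these I get
\[
\rwc(\ell_d)\;\le\;n_{\ell_d}\bigl(R_1+R_2+\gamma_1\varepsilon_0\bigr)
\]
for a suitable constant $\gamma_1$.

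To finish, I would substitute the closed form $n_{\ell_d}=4^{\ell_d(\ell_d+1)}b_0^{\ell_d+1}$ from \eqref{eq:ach_prf_params} to obtain the first inequality in the lemma, and then observe that since $b_0$, $R_1$, $R_2$, $\varepsilon_0$, and $\gamma_1$ are all constants independent of $n$ and $\ell_d$, the entire expression is of the form $\exp_2\bigl(O(\ell_d^2)+O(\ell_d)\bigr)$, which can be absorbed into $2^{\gamma_2\ell_d^2}$ for any sufficiently large constant $\gamma_2$ (valid because $\ell_d\ge 1$; the $\ell_d=0$ case reduces to a constant and is handled separately by inflating $\gamma_2$).

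There is no real conceptual obstacle here; the only care needed is verifying that the geometric decay of $\vel$ dominates the mild logarithmic factor $\log(e\,2^\ell/\varepsilon_0)$ and the polynomial growth of $b_\ell/n_\ell$, so that the level-$\ell$ rates summed over $\ell\ge 1$ contribute only $O(\varepsilon_0)$ beyond $R_1+R_2$. Once this is checked, both inequalities follow by direct substitution.
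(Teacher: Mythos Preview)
Your proposal is correct and follows essentially the same direct bit-counting argument as the paper: sum the per-level contributions $\frac{n_{\ell_d}}{n_\ell}(\kla+\klb)$, use the geometric decay of $\vel$ to bound the total by $n_{\ell_d}(R_1+R_2+\gamma_1\varepsilon_0)$, and then substitute the closed form for $n_{\ell_d}$. Your write-up is in fact more careful than the paper's (you separate out level $0$ and justify convergence of the tail), with only a harmless verbal slip---$b_\ell/n_\ell=1/n_{\ell-1}$ decays rather than grows, but your bound $b_\ell/n_\ell\le 1/b_0$ is correct and is all that is needed.
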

where $\gamma_1$ is a constant that only depends on $p_{XY},R_1,R_2$, while $\gamma_2$ may depend on $p_{XY},R_1,R_2,\varepsilon_0,b_0$.
\begin{proof}
The total number of compressed bits probed is equal to
\begin{align*}
  \rwc(\ell_d) &= \sum_{\ell=0}^{\ell_d}\frac{n_{\ell_d}}{n_\ell}(\kla+\klb) \\
               &\leq n_{\ell_d}(R_1+R_2+\gamma_2\varepsilon_0)\\
               &=b_0^{\ell_d+1}4^{\ell_d(\ell_d+1)}(R_1+R_2+\gamma_1\varepsilon_0)
\end{align*}
where $\gamma_1$ is a constant that only depends on $p_{XY},R_1,R_2$.
\end{proof}

\begin{lemma}\label{lemma_ach_pe}
  For any given parameters $(i,\ell_d)$, the probability of error of decoding $X^{\nld},Y^{\nld}$ after decoding up to level $\ell_d$ is upper bounded as follows
  \[
    \pe{\ell_d} \leq 2^{-\beta (\varepsilon_0b_0)^{\ell+1}2^{\ell^2}}=2^{-2^{O(\log \rwc(\ell_d))}}
  \]
\end{lemma}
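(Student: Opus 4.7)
The plan is to establish the bound by induction on the level $\ell$, with $p_\ell$ denoting the probability that the decoder's level-$\ell$ estimate $(\hat{X}^{n_\ell}(\ell,i),\hat{Y}^{n_\ell}(\ell,i))$ differs from the true pair $(X^{n_\ell}(\ell,i),Y^{n_\ell}(\ell,i))$. The base case $\ell=0$ is immediate from Theorem~\ref{thm:slepianwolf} applied to length-$b_0$ Slepian--Wolf blocks: the choice of $\kza,\kzb$ in \eqref{eq:ach_prf_kzero} yields $p_0\le 2^{-\beta\varepsilon_0 b_0}$.

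For the inductive step I would decompose the level-$\ell$ decoding error into two disjoint sub-events. The first, \emph{Type (A)}, is that more than $\vel b_\ell$ of the $b_\ell$ level-$(\ell-1)$ sub-blocks within the current chunk are decoded incorrectly, in which case the true pair itself fails the consistency check and cannot be recovered. The second, \emph{Type (B)}, is that the true pair passes the consistency check but some other pair $(x^{n_\ell},y^{n_\ell})$ also matches both hashes $U^{\kla}(\ell,i),V^{\klb}(\ell,i)$ and agrees with the level-$(\ell-1)$ estimates on at least $(1-\vel)b_\ell$ sub-blocks, so the correct candidate is not unique.

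For Type (A), I would use that the level-$(\ell-1)$ decoding errors of different sub-blocks are mutually independent, since each sub-block depends on disjoint source symbols and on independently drawn hashes at every level below $\ell$. A union bound over the $\binom{b_\ell}{\vel b_\ell}$ choices of erroneous positions then gives $\Pr[\text{Type (A)}] \le (ep_{\ell-1}/\vel)^{\vel b_\ell}$ via the standard $\binom{n}{k}\le (en/k)^k$ estimate. For Type (B), a wrong candidate is specified by choosing at most $\vel b_\ell$ sub-blocks to differ from the estimate and filling them with arbitrary values, yielding at most $\binom{b_\ell}{\vel b_\ell}(|\cX||\cY|)^{\vel n_\ell}$ candidates; each matches both independently assigned hashes with probability $2^{-\kla-\klb}$. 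The definitions of $\kla,\klb$ in \eqref{eq:ach_prf_params} are engineered precisely so that this product is at most $2^{-\Theta(\vel n_\ell)}$, negligible compared to the Type (A) contribution.

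Combining yields the recursion $p_\ell \le (ep_{\ell-1}/\vel)^{\vel b_\ell}+2^{-\Theta(\vel n_\ell)}$, equivalently in the log domain $\log(1/p_\ell)\ge \vel b_\ell\log(1/p_{\ell-1})-O(\vel b_\ell\,\ell)$. Since $\vel b_\ell=\varepsilon_0 b_0\cdot 2^{3\ell}$ and the base case gives $\log(1/p_0)\ge \beta\varepsilon_0 b_0$, unrolling the recursion produces an exponent of the form $\beta(\varepsilon_0 b_0)^{\ell+1}\cdot 2^{\Theta(\ell^2)}$, which matches the claim; rewriting this in terms of $\rwc(\ell_d)$ via Lemma~\ref{lemma:ach_localdecodability} (so that $\log\rwc(\ell_d)=\Theta(\ell_d^2)$) gives the alternative form $2^{-2^{O(\log\rwc(\ell_d))}}$. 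The main obstacle I anticipate is the bookkeeping at each level: one must verify uniformly in $\ell$ that the Type (B) contribution really is dominated by the Type (A) contribution (which is why the slack terms $|\cX|,|\cY|$ and $(b_\ell/n_\ell)\log(e2^\ell/\varepsilon_0)$ appear in \eqref{eq:ach_prf_params}), and that the $-\log(e/\vel)=-O(\ell)$ correction in the recursion never erodes the clean doubly-geometric scaling that the exponent needs.
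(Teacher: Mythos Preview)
Your proof outline matches the paper's almost step for step: induction on the level, the same two error events (your Type (A) is the paper's $\cE_1$, your Type (B) is $\cE_2$), the same binomial bound $(ep_{\ell-1}/\vel)^{\vel b_\ell}$ for the first, and the same hash-collision count for the second.

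There is one step that, as written, would fail. In your Type (B) bound you claim that each wrong candidate pair $(\tilde x^{n_\ell},\tilde y^{n_\ell})$ matches both hashes with probability $2^{-\kla-\klb}$. This is not true for candidates in which one coordinate coincides with the truth: if $\tilde x^{n_\ell}=X^{n_\ell}$ but $\tilde y^{n_\ell}\ne Y^{n_\ell}$, the $X$-hash is matched with probability $1$, not $2^{-\kla}$, so your union bound undercounts. The paper avoids this by splitting $\cE_2$ into $\cE_2^A$ (some $\tilde x^{n_\ell}\ne X^{n_\ell}$ matches the $X$-hash and the estimate) and $\cE_2^B$ (the analogous event for $Y$), and bounding each separately by $\binom{b_\ell}{\vel b_\ell}|\cX|^{\vel n_\ell}2^{-\kla}$ and $\binom{b_\ell}{\vel b_\ell}|\cY|^{\vel n_\ell}2^{-\klb}$. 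With this split, the slack terms in \eqref{eq:ach_prf_params} do exactly the job you anticipate, and the rest of your recursion goes through unchanged.
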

\begin{proof}
  We will derive the bound by obtaining an upper bound on $\pe{\ell}$ in terms of $\pe{\ell-1}$.
  For  $\ell=0$, we know that
  \[
    \pe{0} \leq 2^{-\beta \varepsilon_0b_0}
  \]
  for a suitable constant $\beta>0$.

  For decoding at level $\ell>1$, there are two possible error events:
  \begin{enumerate}
  \item Event $\cE_1$: More than $\vel b_\ell$ blocks were decoded incorrectly at level $\ell-1$
  \item Event $\cE_2$: There is an incorrect pair of sequences $(\widetilde{x}^{n_{\ell}},\tilde{y}^{n_{\ell}})$ that has the same level-$\ell$ hash/codeword as the true sequence and matches the $(\ell-1)$-level decoded sequence on at least $(1-\vel)b_\ell$ blocks.
  \end{enumerate}
  The overall probability of error is then
  \[
    \pe{\ell} \leq \Pr[\cE_1]+\Pr[\cE_2|\cE_1^c].
  \]
  We will bound the two terms separately. For the first term, observe that
  \begin{align*}
    \Pr[\cE_1] &\leq \nchoosek{b_\ell}{\vel b_{\ell}} \left( \pe{\ell-1} \right)^{\vel b_\ell} &\\
               &\leq \left( \frac{e \pe{\ell-1}}{\vel} \right)^{\vel b_\ell}\\
               &\leq \left( \frac{e 2^{-\beta (\varepsilon_0 b_0)^{\ell}2^{(\ell-1)^2}}}{\varepsilon_0/2^\ell} \right)^{\varepsilon_0 b_0 8^\ell}
  \end{align*}
  where in the last step, we have assumed that $\pe{\ell-1}\leq 2^{-\beta (\varepsilon_0 b_0)^{\ell}2^{(\ell-1)^2}}$. Rewriting the right-hand side, we get
  \begin{align*}
    \Pr[\cE_1] &\leq \exp_2\Bigg( -\beta (\varepsilon_0b_0)^{\ell+1}2^{\ell^2+\ell+1} \\
    &\qquad\qquad+\epsilon_0b_08^\ell\log\left( \frac{e2^\ell}{\varepsilon_0} \right) \Bigg)
  \end{align*}
  Since $\varepsilon_0b_0$ is large enough, the absolute value of the first term in the exponent is at least twice that of the second. Therefore,
  \begin{align}
    \Pr[\cE_1] &\leq \exp_2\left( -\beta (\varepsilon_0b_0)^{\ell+1}2^{\ell^2+\ell}\right) \leq \frac{2^{ -\beta (\varepsilon_0b_0)^{\ell+1}2^{\ell^2}}}{2} &\label{eq:pr_E1}
  \end{align}

  To compute the probability of the second error event, let us define $\cE_2^A$ (resp.\ $\cE_2^B$) to be  the event that there is an incorrect sequences $\widetilde{x}^{n_{\ell}}$ (resp.\ $\tilde{y}^{n_{\ell}})$) that has the same hash as the true sequence and matches the $(\ell-1)$-level decoded sequence on at least $(1-\vel)b_\ell$ blocks.
  We have,
  \begin{align*}
    \Pr[\cE_2^A\vert \cE_1^c] &\leq \nchoosek{b_\ell}{\vel b_{\ell}} |\cX|^{\vel n_{\ell}}2^{-\kla}\\
                              &\leq \left(\frac{e}{\vel}\right)^{\vel b_{\ell}}|\cX|^{\vel n_{\ell}}2^{-\kla}                     
  \end{align*}
  Substituting for $\kla$ in the above and simplifying, we get
  \begin{equation}\label{eq:pr_E2A}
    \Pr[\cE_2^A\vert \cE_1^c] \leq 2^{-\beta \epsilon_\ell n_\ell} \leq \frac{2^{ -\beta (\varepsilon_0b_0)^{\ell+1}2^{\ell^2}}}{4}
  \end{equation}
  Similarly,
  \begin{equation}\label{eq:pr_E2B}
    \Pr[\cE_2^B\vert \cE_1^c] \leq 2^{-\beta \epsilon_\ell n_\ell} \leq \frac{2^{ -\beta (\varepsilon_0b_0)^{\ell+1}2^{\ell^2}}}{4}
  \end{equation}
  Combining \eqref{eq:pr_E1}, \eqref{eq:pr_E2A} and \eqref{eq:pr_E2B}, we get
  \begin{align*}
    \pe{\ell} &\leq \Pr[\cE_1] + \Pr[\cE_2^A\vert \cE_1^c] + \Pr[\cE_2^B\vert \cE_1^c] \\
              &\leq 2^{ -\beta (\varepsilon_0b_0)^{\ell+1}2^{\ell^2}},
  \end{align*}
  which completes the proof.
\end{proof}

\section{Extension to $k>2$ sources}\label{estensione}
We first extend Theorem~\ref{thm:converse_general} to a $k$-source distribution  $p_{X_1,\ldots,X_k}$ defined over alphabet $\cX_1\times\cdots\times \cX_k$. Let the source be $\cX_1$-confusable if for every $x_1,x_1'\in\cX_1$, there exist $(x_2,x_3,\ldots,x_k)$ for which $p_{X_1,\ldots,X_k}(x_1,\ldots,x_k)>0$ and $p_{X_1,\ldots,X_k}(x_1',\ldots,x_k)>0$. Observe that this condition holds if and only if for every $x_1,x_1'\in\cX_1$, there exist an index $i\geq 2$ and $x_i\in\cX_i$ for which $p_{X_1X_i}(x_1,x_i)>0$ and $p_{X_1,X_i}(x_1',x_i)>0$. Now if we repeat the same line of arguments as for the proof of Theorem~\ref{thm:converse_general}, but with the side information $Y^n$ replaced by all sources except $X_1^n$, that is $X_2^n,\ldots, X_k^n$, we get:
\begin{theorem}[Confusable, $k\geq 2$ sources]\label{ext1}
Suppose source $ p_{X_1,\ldots,X_k}$ is $\cX_1$-confusable. If $X^n_1$ is compressed at rate $R_1<H(X_1)$, then
$$\max_{1\leq i\leq n}\Pr[\hat{X}_{1i}(C_{{\cI}_{1i}}(X^n_1),X_2^n,\ldots, X_k^n)\neq X_{1i}]\geq 2^{-\Theta(\rwc)},$$
where $\hat{X}_{1i}(C_{{\cI}_{1i}}(X^n_1),X_2^n,\ldots, X_k^n)$ is any estimator of the $i$-th symbol of source $X_1^n$ given at most $\rwc$ components $C_{{\cI}_{1i}}(X^n_1)$ of $C^{nR_1}(X_1^n)$ and $(X_2^n,\ldots, X_k^n)$.
\end{theorem}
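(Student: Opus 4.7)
The plan is to mimic the proof of Theorem~\ref{thm:converse_general} almost verbatim, replacing the single side-information sequence $Y^n$ by the tuple $(X_2^n,\ldots,X_k^n)$ and the conditional distribution $p_{X|Y}$ by $p_{X_1\mid X_2,\ldots,X_k}$. Concretely, define the ``confuser'' coupling by drawing $\widetilde{X}_1$ through the Markov chain
\[
X_1 \;-\; (X_2,\ldots,X_k) \;-\; \widetilde{X}_1,
\]
with $p_{\widetilde{X}_1\mid X_2,\ldots,X_k}=p_{X_1\mid X_2,\ldots,X_k}$, so that $(X_1,X_2,\ldots,X_k)$ and $(\widetilde{X}_1,X_2,\ldots,X_k)$ have the same joint law. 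The first step is to verify the analog of Lemma~\ref{lemma:pxx'}: $\cX_1$-confusability implies $p_{X_1\widetilde{X}_1}$ has full support on $\cX_1\times\cX_1$. Indeed, for any $(x_1,x_1')\in\cX_1^2$, $\cX_1$-confusability gives $(x_2,\ldots,x_k)$ with both $p_{X_1,\ldots,X_k}(x_1,x_2,\ldots,x_k)>0$ and $p_{X_1,\ldots,X_k}(x_1',x_2,\ldots,x_k)>0$; summing the product $p_{X_1\mid X_2^k}(x_1\mid\cdot)\,p_{X_1\mid X_2^k}(x_1'\mid\cdot)\,p_{X_2^k}(\cdot)$ over all $(x_2,\ldots,x_k)$ therefore yields a strictly positive value for $p_{X_1\widetilde{X}_1}(x_1,x_1')$.

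Next, I would apply Lemma~\ref{lemma:worsti} directly to the compressed sequence $C^{nR_1}(X_1^n)$ since $R_1<H(X_1)$, obtaining an index $i$, a ``bad'' MAP symbol $\bar{x}\in\cX_1$, and a local codeword $c$ such that
\[
\Pr[X_{1i}=\bar{x},\,C_{\cI_{1i}}=c]\ \geq\ \delta\,|\cX_1|^{-\rwc-1},\qquad \Pr[X_{1i}\neq\bar{x},\,C_{\cI_{1i}}=c]\ \geq\ \delta\,|\cX_1|^{-\rwc}.
\]
I would then state and prove the multi-source analog of Lemma~\ref{lemma:pmap_pxx'}: for any estimator $\hat{X}_{1i}(C_{\cI_{1i}},X_2^n,\ldots,X_k^n)$, conditioning on each realization of $(X_2^n,\ldots,X_k^n)=(x_2^n,\ldots,x_k^n)$ reduces the problem to a binary hypothesis test between $\{X_{1i}=\bar{x}\}$ and $\{X_{1i}\neq\bar{x}\}$, and bounding the error probability from below by $\min\{\cdot\}\cdot\max\{\cdot\}/\max\{\cdot\}\geq \min\{\cdot\}\cdot\max\{\cdot\}$ yields, after summing over $(x_2^n,\ldots,x_k^n)$ and invoking the Markov chain,
\[
\Pr[\hat{X}_{1i}\neq X_{1i},\,C_{\cI_{1i}}(X_1^n)=c]\ \geq\ \Pr[X_{1i}=\bar{x},\,\widetilde{X}_{1i}\neq\bar{x},\,C_{\cI_{1i}}(X_1^n)=C_{\cI_{1i}}(\widetilde{X}_1^n)=c].
\]
The final step is to apply Lemma~\ref{lemma:reversehypercontractivity} to the full-support distribution $p_{X_1\widetilde{X}_1}$ and the events $\cA=\{x_1^n:x_{1i}=\bar{x},C_{\cI_{1i}}(x_1^n)=c\}$, $\cB=\{\widetilde{x}_1^n:\widetilde{x}_{1i}\neq\bar{x},C_{\cI_{1i}}(\widetilde{x}_1^n)=c\}$, whose marginal probabilities were just lower bounded by $2^{-\Theta(\rwc)}$; raising these to constant powers gives $2^{-\Theta(\rwc)}$ overall.

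The only genuinely new content relative to the two-source case is the check in the first paragraph that $\cX_1$-confusability (a condition involving the full joint distribution) still guarantees full support for $p_{X_1\widetilde{X}_1}$, but this follows immediately once one writes out the definition of $\widetilde{X}_1$. All remaining steps—the rate-distortion argument, the hypothesis-test reduction, and the reverse-hypercontractivity conclusion—carry over with purely cosmetic changes, since they never used any structural property of the side information beyond the joint full-support-after-coupling property. I therefore do not expect any real obstacle; the proof is essentially a notational extension, and I would present it by stating the analogs of Lemmas~\ref{lemma:pxx'} and \ref{lemma:pmap_pxx'} and then invoking the chain of inequalities from the proof of Theorem~\ref{thm:converse_general} mutatis mutandis.
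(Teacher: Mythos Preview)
Your proposal is correct and matches the paper's own approach essentially verbatim: the paper explicitly states that Theorem~\ref{ext1} follows by repeating the proof of Theorem~\ref{thm:converse_general} with the side information $Y^n$ replaced by $(X_2^n,\ldots,X_k^n)$, which is precisely the coupling, full-support check, rate-distortion step, hypothesis-test reduction, and reverse-hypercontractivity argument you outline.
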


Similarly, Theorem~\ref{lemma:yconfusable_necessary} immediately generalizes to
\begin{theorem}[Non-confusable, $k\geq 2$ sources ]\label{ext2}
  Suppose source $ p_{X_1,\ldots,X_k}$ is not $\cX_1$-confusable.
  Then, it is possible to achieve strong locality at some  $R_1<H(X_1)$ and $R_i=H(X_i)$, $i\in \{2,\ldots,k\}$.
\end{theorem}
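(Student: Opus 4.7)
The plan is to adapt the argument of Theorem~\ref{lemma:yconfusable_necessary} essentially verbatim, replacing the role of the single side-information source $Y^n$ by the joint side information $(X_2^n,\ldots,X_k^n)$. First, I would unpack the non-confusability hypothesis: by definition there exist $x_1,x_1'\in\cX_1$ (both having positive marginal probability, since $\cX_1$ is taken to be the support of $p_{X_1}$) such that for every tuple $(x_2,\ldots,x_k)\in\cX_2\times\cdots\times\cX_k$, at least one of $p_{X_1,\ldots,X_k}(x_1,x_2,\ldots,x_k)$ and $p_{X_1,\ldots,X_k}(x_1',x_2,\ldots,x_k)$ equals $0$. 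Equivalently, conditioned on $X_1\in\{x_1,x_1'\}$, the tuple $(X_2,\ldots,X_k)$ deterministically identifies which of the two values $X_1$ takes.

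Next, I would build a reduced source. Relabel $\cX_1=\{1,2,\ldots,|\cX_1|\}$ with $(x_1,x_1')=(1,2)$ and define $U$ over $\{2,3,\ldots,|\cX_1|\}$ by merging the two confusable symbols:
\[
U = \begin{cases} 2 & \text{if } X_1\in\{1,2\},\\ X_1 & \text{otherwise.}\end{cases}
\]
The strict inequality $H(U)<H(X_1)$ follows from the fact that two originally distinct symbols both with positive probability have been identified. Moreover, the observation above ensures that $(U,X_2,\ldots,X_k)$ is a deterministic function of $(X_1,X_2,\ldots,X_k)$ and vice versa, so from any joint recovery of $(U_i,X_{2i},\ldots,X_{ki})$ one recovers $X_{1i}$ by a constant-time look-up.

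Having replaced the original $k$-tuple source by $(U^n,X_2^n,\ldots,X_k^n)$, each component can be compressed independently by the strongly local single-source compressors of \cite{mazumdar2015local,tatwawadi18isit_universalRA}, using rates $R_1'=H(U)+\varepsilon<H(X_1)$ and $R_i=H(X_i)+\varepsilon$ for $i\geq 2$ (the $\varepsilon$'s can be absorbed, with $R_i$ driven down to $H(X_i)$ in the limit as in Theorem~\ref{lemma:yconfusable_necessary}). For any desired coordinate $i\in[n]$, the local decoder probes $O(1)$ bits from each of the $k$ compressors to recover $(U_i,X_{2i},\ldots,X_{ki})$ with vanishing probability of error by a union bound over $k$ independent events, and then maps back to $(X_{1i},X_{2i},\ldots,X_{ki})$. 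This realizes strong locality at $R_1<H(X_1)$ and $R_i=H(X_i)$ for $i\geq 2$.

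There is essentially no obstacle: the only delicate point is verifying the two easy facts that $H(U)<H(X_1)$ strictly and that the inverse map $(U,X_2,\ldots,X_k)\mapsto(X_1,\ldots,X_k)$ is well-defined, both of which are immediate consequences of the non-$\cX_1$-confusability assumption together with the convention that each alphabet is the support of its marginal.
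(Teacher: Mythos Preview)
Your proposal is correct and follows exactly the approach the paper intends: the paper states that Theorem~\ref{ext2} ``immediately generalizes'' Theorem~\ref{lemma:yconfusable_necessary}, and your argument is precisely that generalization---merging the two non-confusable symbols of $\cX_1$ into a reduced source $U$ with $H(U)<H(X_1)$, observing that $(U,X_2,\ldots,X_k)$ determines $(X_1,\ldots,X_k)$ via the non-confusability hypothesis, and then applying the single-source strongly local compressors of \cite{mazumdar2015local,tatwawadi18isit_universalRA} to each of $U^n,X_2^n,\ldots,X_k^n$ separately.
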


The coding scheme of Section~\ref{sec:achievability} easily extends to more than two sources, with the same encoding scheme for each source, and an identical local decoder:  
\begin{theorem}[Hierarchical coding scheme, $k\geq 2$ sources]\label{ext3}
  For any $(R_1,R_2,\ldots,R_k)$ in the interior of the Slepian-Wolf rate region, there exists a rate $(R_1,R_2,\ldots,R_k)$  distributed compression scheme such that for every $1>\eta>2^{-2^{O(\log n)}}$, the local decoder achieves $\rwc=\mathrm{poly}(\log(1/\eta))$ and $\peld\leq \eta$.
\end{theorem}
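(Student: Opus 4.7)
The plan is to mirror the construction of Section~\ref{sec:achievability} verbatim, treating the $k$ sources symmetrically. At level~$0$, I would partition each of the $k$ source sequences $X_j^n$ into consecutive blocks of length $b_0=n_0$, and jointly encode the $j$-th block across sources using a $k$-source Slepian–Wolf random binning code, with the $j$-th encoder outputting $\kza^{(j)}=\lceil (R_j+\varepsilon_0) b_0\rceil$ bits. The $k$-source extension of Slepian–Wolf (see, e.g., \cite{elgamal2011network}) guarantees a level-$0$ decoding error probability at most $2^{-\beta\varepsilon_0 b_0}$ for some $\beta>0$ depending on $p_{X_1,\ldots,X_k}$ and $(R_1,\ldots,R_k)$ lying in the interior of the rate region. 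At each higher level $\ell\geq 1$, I would independently hash the $n_\ell$-blocks of each source $j$ to a length $k_\ell^{(j)}=\vel n_\ell\bigl(\beta+|\cX_j|+\tfrac{b_\ell}{n_\ell}\log\tfrac{e 2^\ell}{\varepsilon_0}\bigr)$, using the same recursion $b_\ell=16\,b_{\ell-1}$, $n_\ell=b_\ell n_{\ell-1}$, $\vel=\varepsilon_0/2^\ell$ as in the two-source case.

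The local decoder is the direct $k$-source analogue: given $(i,\ell_d)$, read all level-$\leq\ell_d$ hashes associated with the level-$\ell_d$ chunk containing position $i$; run the $k$-source Slepian–Wolf decoder at level~$0$; and at each subsequent level~$\ell\leq\ell_d$, output the unique $k$-tuple $(x_1^{n_\ell},\ldots,x_k^{n_\ell})$ whose hashes match $U^{k_\ell^{(j)}}(\ell,i)$ for every $j$ and that agrees with the level-$(\ell-1)$ estimate on at least $(1-\vel)b_\ell$ of its $b_\ell$ sub-blocks, or the zero sequence if no such unique tuple exists.

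The error analysis generalizes Lemma~\ref{lemma_ach_pe} by replacing the pair of spurious-match events $\cE_2^A,\cE_2^B$ by $k$ events $\cE_2^{(1)},\ldots,\cE_2^{(k)}$, one for each source. For each $j$, the number of candidate incorrect sequences is at most $\binom{b_\ell}{\vel b_\ell}|\cX_j|^{\vel n_\ell}$, and the probability each passes the random hash test is $2^{-k_\ell^{(j)}}$, so the choice of $k_\ell^{(j)}$ (which includes the $|\cX_j|$ term) makes $\Pr[\cE_2^{(j)}\mid\cE_1^c]\leq \tfrac{1}{2k}\cdot 2^{-\beta(\varepsilon_0 b_0)^{\ell+1}2^{\ell^2}}$. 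A union bound over the $k$ events and the bad-block event $\cE_1$ (whose bound is unchanged since it depends only on the level-$(\ell-1)$ probability of error) yields $\pe{\ell}\leq 2^{-\beta(\varepsilon_0 b_0)^{\ell+1}2^{\ell^2}}$ by induction, exactly as before.

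For the rate and locality accounting, summing $\sum_{\ell\geq 1}k_\ell^{(j)}/n_\ell$ gives a geometric series in $\vel=\varepsilon_0/2^\ell$ contributing $O(\varepsilon_0)$ to the rate of source $j$, so each source is compressed at rate $R_j+O(\varepsilon_0)$. Lemma~\ref{lemma:ach_localdecodability} generalizes via $\rwc(\ell_d)=\sum_{\ell=0}^{\ell_d}\tfrac{n_{\ell_d}}{n_\ell}\sum_{j=1}^{k}k_\ell^{(j)}$, which for fixed $k$ is still at most $2^{\gamma_2 \ell_d^2}$. Inverting $\pe{\ell_d}\leq \eta$ yields $\ell_d=O(\sqrt{\log\log(1/\eta)})$ and hence $\rwc=\mathrm{poly}(\log(1/\eta))$. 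The only nontrivial bookkeeping — and the step I expect to require the most care — is choosing the constants so that the geometric series over levels simultaneously absorbs the $|\cX_j|$ factor for each of the $k$ sources into the $O(\varepsilon_0)$ rate overhead; but since $k$ is a constant, this is just a matter of taking $\varepsilon_0$ and $b_0$ sufficiently small and large respectively, uniformly in $j$.
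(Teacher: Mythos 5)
Your proposal is correct and follows exactly the route the paper takes: Theorem~\ref{ext3} is justified in the paper simply by asserting that the two-source hierarchical scheme of Section~\ref{sec:achievability} extends verbatim to $k$ sources, with the same per-source encoder and an identical local decoder, which is precisely what you carry out (in somewhat more detail than the paper itself provides, including the correct union bound over the $k$ spurious-match events and the inversion $\ell_d=O(\sqrt{\log\log(1/\eta)})$). I see no gaps.
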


\section{Concluding Remarks}\label{remfin}
In contrast with the single source set up, we showed that for multiple sources lossless compression and strong locality can generally not be accommodated. For the broad class of confusable sources, for strong locality to hold all sources must be compressed at rates above their respective entropies. On the other hand, if the distribution is not confusable, an arguably peculiar situation, strong locality may hold even if compression rates are below individual entropies. For this case, the characterization of all rate pairs for which strong locality can be achieved remains an open problem.

Our compression scheme is able to achieve $\rwc = {\mathrm{poly}}(\log(1/\eta))$ for any target probability of local decoding error $\eta$ specified at the decoder. Note that from our lower bound, $\rwc =\Omega(\log(1/\eta))$ and our scheme is suboptimal by a polynomial factor. Designing an improved scheme that achieves this lower bound is left as future work.

In this paper, we only considered the problem of local decodability in the context of distributed compression. One may also require provisioning of local substitutions/insertions/deletions of source symbols in the compressed domain. This is an interesting problem that warrants more attention.
\bibliographystyle{IEEEtran}
\bibliography{locality_references,codes_with_locality}
	
\end{document}